\documentclass[12pt, journal,onecolumn]{IEEEtran}

\usepackage{algorithmic}

\usepackage{graphicx}
\usepackage{mathptmx}
\usepackage{amsthm}
\usepackage{color}
\usepackage{balance}
\newcommand{\newreptheorem}[2]{%
\newenvironment{rep#1}[1]{%
 \def\rep@title{#2 \ref{##1}}%
 \begin{rep@thm}}%
 {\end{rep@thm}}}

\newreptheorem{theorem}{Theorem}
\newtheorem*{rep@thm}{\rep@title}

\newcommand{\F}{\mathbb{F}}
\newcommand{\E}{\mathbb{E}}
\newcommand{\mS}{\mathcal{S}}

\usepackage[utf8]{inputenc} 
\usepackage[T1]{fontenc}
\usepackage{url}
\usepackage{ifthen}
\usepackage{nicefrac, xfrac}
\usepackage{newtxtext}
\usepackage{newtxmath}

\usepackage[hidelinks]{hyperref}
\usepackage{cleveref}
\usepackage{epsfig, cite}
\usepackage{amsfonts}
\usepackage{graphicx}
\usepackage{soul}
\usepackage{latexsym}
\usepackage{colortbl}
\usepackage{comment}
\usepackage[dvipsnames]{xcolor}
\usepackage{caption}
\usepackage{subcaption}

\DeclareMathOperator{\rank}{rank}

\DeclareMathOperator{\Ima}{Im}

\newcommand\qbin[3]{\left[\begin{smallmatrix} #1 \\ #2 \end{smallmatrix}\right]_{#3}}

\newtheorem{thm}{Theorem}[section]
\newtheorem{lem}[thm]{Lemma}

\newtheorem{cor}{Corollary}

\newtheorem{defn}{Definition}[section]

\newtheorem{rem}{Remark}
\newtheorem{cla}{Claim}
\newtheorem{prob}{Problem}

\begin{document}
	
\title{Optimal Codes for the Coverage Depth Problem and the Performance of Random Codes} 
\author{
\IEEEauthorblockN{
\textbf{R. Gross}\IEEEauthorrefmark{1},
\textbf{Y. Grunbaum}\IEEEauthorrefmark{1},
\textbf{M. Bertuzzo}\IEEEauthorrefmark{2},
\textbf{E. Yaakobi}\IEEEauthorrefmark{1},
\textbf{A. Ravagnani}\IEEEauthorrefmark{2}
}

\IEEEauthorblockA{
\IEEEauthorrefmark{1}
Department of Computer Science,\\
Technion---Israel Institute of Technology,\\
Haifa 3200003, Israel\\
\{roee.g, yitzchakg, yaakobi\}@cs.technion.ac.il
}

\IEEEauthorblockA{
\IEEEauthorrefmark{2}
Department of Mathematics and Computer Science,\\
Eindhoven University of Technology,\\
 the Netherlands\\
\{m.bertuzzo, a.ravagnani\}@tue.nl
}


}

\IEEEoverridecommandlockouts

\maketitle

\begin{abstract}
DNA storage systems retrieve information by randomly sampling synthesized DNA strands, making the number of reads required for successful recovery a fundamental performance measure. This motivates the coverage depth problem: for given code parameters, determine a linear code that minimizes the expected number of randomly sampled columns required to span the entire information space. While MDS codes are known to be optimal whenever they exist, identifying optimal codes in parameter regimes where MDS codes do not exist remains largely open. In this work, we derive a general formula for the expectation in the coverage depth setting, and apply this result to establish the optimality of the $q$-ary simplex code for the parameters that allow its existence. Moreover, we show the optimality of the $q$-ary Hamming code by relating coverage depth to independent sets in the dual code and exploiting log-concavity properties of matroid basis-generating polynomials. We further analyze random linear codes, derive an exact expression for their expected coverage depth, and show that, in the constant-rate regime, their additive gap from the MDS benchmark remains bounded independently of the code length.
\end{abstract}

\section{Introduction}
DNA-based storage systems represent a promising solution for addressing the unprecedented growth of digital information that needs to be stored globally~\cite{rydning2022worldwide}.
Extreme compactness (orders of magnitude denser than any traditional storage medium), durability under suitable conditions, and low maintenance costs~\cite{alliance2021preserving, markowitz2023biology} are distinctive characteristics that make DNA-based systems particularly attractive for long-term data storage.

The DNA storage pipeline involves multiple steps: After encoding data into sequences over the DNA alphabet $\{A,C,G,T\}$ and dividing them into blocks, these blocks are synthesized into artificial DNA molecules known as \emph{strands}. Current limitations restrict the length of the synthesized strands~\cite{leproust2010synthesis}; furthermore, copies of each strand are produced that may contain errors. The resulting molecules are then stored in a container in an unordered manner. The process required to retrieve the stored information is sequencing, during which the strands are translated into digital sequences, called \emph{reads}, which are noisy copies of the stored strands. The reads are then decoded back into the original information.
A key feature of this process is that the DNA sequences are \emph{randomly} read to decode the original information.

Although various works have demonstrated the potential of DNA as a storage medium~\cite{tabatabaei2015rewritable, blawat2016forward, bornholt2016dna, yazdi2017portable, organick2018random, anavy2019data, barlev2021deep}, critical issues related to the efficiency of DNA sequencers~\cite{alliance2021preserving, yazdi2015dna, shomorony2022information} remain, resulting in slow throughput and high costs compared to other storage technologies. This bottleneck is closely related to the notion of \emph{coverage depth}~\cite{heckel2019characterization}, defined as the ratio between the number of reads that are sequenced and the number of designed DNA strands. Thus, an opportunity to improve efficiency and reduce costs consists of reducing the coverage depth by reducing the number of reads needed for successful recovery.

In a recent paper~\cite{barlev2025coveryourbases}, various problems aimed at enhancing the reliability and effectiveness of data retrieval were introduced and analyzed for the first time, with special emphasis on the \emph{random access coverage depth problem}; in this model, the user's data is represented by $k$ information strands, which are encoded into $n$ strands through a matrix $G \in \F_q^{k \times n}$, typically referred to as a generator matrix of an error-correcting code. The goal is to compute, and possibly minimize, the expected number of columns to draw to recover a specific information strand among the $k$ that have been stored. A similar version is represented by the \emph{DNA coverage depth problem}, where all the $k$ information strands must be retrieved.

Both problems have been studied primarily under the assumptions of error-free synthesis and sequencing, and uniform distribution when drawing the columns of the matrix.
The random access model has attracted growing interest in different directions, such as identifying the structural properties of generator matrices that are advantageous~\cite{gruica2025combinatorial}, providing optimal constructions for specific cases~\cite{gruica2026geometryofcodes, boruchovsky2025randomaccess}, and extending this model at the file level~\cite{abraham2024covering}.
Furthermore,~\cite{bodur2026coverage, wang2026randomaccess} improved upon these constructions, deriving bounds and computing asymptotic performances, while~\cite{hofmeister2026randomaccess} focused on the optimization problem in an asymptotic regime in the binary case.

On the other hand, the DNA coverage depth problem was addressed in~\cite{barlev2025coveryourbases}, where the authors proved that MDS codes achieve the best result in minimizing the expected number of reads by drawing connections to the classical coupon collector's problem~\cite{feller1957introduction, newman1960double, erdos1961classical, flajolet1992birthday}. Moreover, in~\cite{cao2025optimizing}, real DNA storage experimental data were used to extend the problem to noisy channels with log-normal distribution. Building on the framework presented in~\cite{barlev2025coveryourbases}, the DNA coverage depth problem was studied in~\cite{bertuzzo2025coverage, bertuzzo2026coverage} in regimes where MDS codes do not exist. More specifically, closed-form expressions were derived for the expected value of columns for $q$-ary simplex codes, Hamming codes, first-order Reed-Muller codes, and also for ternary Golay codes. 

In the first part of this paper, we address an open problem presented in~\cite{bertuzzo2026coverage}: Our goal is to identify codes that minimize the expected number of reads required to retrieve all the information (i.e., optimal codes) for specific choices of parameters, none of which admits the existence of MDS codes. More specifically, we establish the optimality of various code families: $q$-ary simplex codes, first-order $q$-ary Reed-Muller codes, and $q$-ary Hamming codes.

Concurrently with the preparation of this manuscript, Kong et al.~\cite{kong2026coveragedepthproblemdistributed} independently established the same optimality result for the $q$-ary simplex code in the coverage depth problem.

The second part is inspired by a recent paper by Wang and Yaakobi~\cite{wang2026randomaccess}, where random matrices were used to obtain bounds on the optimal coverage depth, showing in particular that, at constant rate, its additive gap from the MDS bound remains bounded. In this work, we study the random-code ensemble itself and derive an exact expression for its average coverage depth, which enables us to tighten the upper bound on the additive gap from the MDS coverage time by a factor of two compared to the bound established in \cite{wang2026randomaccess}. While \cite{wang2026randomaccess} established that the gap is linear in $n$ within the constant redundancy regime, we demonstrate that our approach yields the same result.

The remainder of this paper is organized as follows. In Section~\ref{section:Preliminaries}, we introduce the notation and definitions used throughout the paper. Section~\ref{section:GeneralFormula} presents a general formula for the expectation and an application of it to the first-order $q$-ary Reed-Muller codes.
Section~\ref{section:SimplexCode}, Section~\ref{section:ReedMuller}, and Section~\ref{section:HammingCode} establish the optimality of the \(q\)-ary Simplex code, the first-order $q$-ary Reed-Muller code, and the \(q\)-ary Hamming code, respectively. Finally, in Section~\ref{section:Random Codes}, we analyze the coverage depth of random linear codes.

\section{Preliminaries}\label{section:Preliminaries}

Throughout the paper, $k$ and $n$ are positive integers with $2 \leq k \leq n$, \(q\) denotes a prime power, \(\mathbb{F}_q\) denotes the finite field of order \(q\), and \([n]=\{1,\ldots,n\}\). We denote the \(n\)-th harmonic number by \(H_n=\sum_{j=1}^{n}1/j\). For a matrix \(G\in\mathbb{F}_q^{k\times n}\), we denote its columns by \(g_1,\ldots,g_n\). For \(S\subseteq[n]\), the matrix \(G_S\) is the submatrix of \(G\) formed by the columns indexed by \(S\). We use \(\operatorname{span}(G_S)\) for the subspace spanned by these columns and \(\operatorname{rank}(G_S)\) for its dimension.

In a DNA-based storage system, the data is stored as a vector of length $k$ whose entries are themselves vectors (called \emph{strands}) of length $\ell$ over the DNA alphabet $\Sigma = \{A,C,G,T\}$.
We embed $\Sigma^\ell$ into a finite field $\mathbb{F}_q$ and use a linear block code $\mathcal{C} \subseteq \mathbb{F}_q^n$ of dimension $k$ to encode an information vector $\smash{(x_1, \dots, x_k) \in \mathbb{F}_q^k}$ to an encoded vector $\smash{(y_1, \dots, y_n) \in \mathbb{F}_q^n}$. 

Since the $k$ information strands are encoded using a generator matrix~$G \in \mathbb{F}_q^{k \times n}$, there is a one-to-one correspondence between the encoded strands and the columns of~$G$; therefore, recovering the $i$-th information strand is equivalent to recovering the $i$-th standard basis vector, that is, it must be in the span of the already recovered columns of $G$, since we can see these columns as vectors in $\mathbb{F}_q^k$.

As already mentioned, a full-rank matrix \(G\in\mathbb{F}_q^{k\times n}\) whose row space is \(\mathcal{C}\) is called a generator matrix of \(\mathcal{C}\), and we write \(\mathcal{C}_G\) for the code generated by \(G\). The dual code of \(\mathcal{C}\) is
\[
\mathcal{C}^{\perp}=\left\{x\in\mathbb{F}_q^n:\langle x,c\rangle=0\text{ for every }c\in\mathcal{C}\right\}.
\]
A generator matrix of \(\mathcal{C}^{\perp}\) is a parity-check matrix of \(\mathcal{C}\). Given a generator matrix \(G\), we use \(G^\perp\) to denote a generator matrix of \(\mathcal{C}_G^\perp\); in particular, \(\ker(G^\perp)=\mathcal{C}_G\).



We now define the coverage process studied in this paper. Let \(G\in\mathbb{F}_q^{k\times n}\) have rank \(k\), 
and suppose that the columns of $G$ are drawn uniformly at random with repetition. The coverage time of \(G\), denoted by $T_G$, is the minimum number of columns of $G$ needed to span $\F_q^k$,
and the coverage depth of \(G\) is its expected coverage time, namely
\[
\mathbb{E}[G]=\mathbb{E}[T_G].
\]

As an example, the coverage process of an MDS code terminates as soon as \(k\) distinct column indices have been sampled, since every $k$ columns of an MDS matrix are linearly independent.

The value of \(\mathbb{E}[G]\) is invariant under elementary row operations. It therefore depends only on the code generated by $G$. This means that the quantity 
$\E[\mathcal{C}]$ is well defined for a linear code $\mathcal{C} \subseteq \F_q^n$ as $\E[G]$, where $G$ is \textit{any} generator matrix of $\mathcal{C}$. We then use $\E[G]$ and $\E[\mathcal{C}]$ interchangeably.

A primary motivation for the study of coverage depth comes from DNA-based data storage.
During DNA readout, individual molecules are sampled from a pool, and direct random access to the encoded symbols is generally unavailable.
As a result, recovery requires collecting samples until they jointly provide sufficient information to reconstruct the stored data.
From a coding-theoretic perspective, it is therefore natural to seek codes that minimize the expected number of samples required for recovery.

The main problem addressed in this paper, which has been introduced in~\cite{bertuzzo2025coverage}, is defined below.
\begin{prob}[\textbf{The optimal coverage depth problem}] \label{problem_1}
    For given values of $n$, $k$ and $q$, compute the value
$$\mathbb{E}_{\textnormal{opt}}[n,k]_q \triangleq \min\{\mathbb{E}[\mathcal C] \, : \,  \mathcal C \subseteq \F_q^n \, \text{ is a code of dimension $k$}\},$$
    and construct
    a code $\mathcal C$ attaining the minimum. 
\end{prob}

The first fundamental result about Problem~\ref{problem_1} was obtained in~\cite{barlev2025coveryourbases}.

\begin{thm}[\text{see \cite[Corollary~2]{barlev2025coveryourbases}}] \label{MDS_bound_thm}
    For any code $\mathcal{C} \subseteq \mathbb{F}_q^n$ of dimension $k$, we have 
    \[
    \mathbb{E}[\mathcal{C}] \ge \sum_{i=0}^{k-1} \frac{n}{n-i} = n(H_n - H_{n-k}).
    \]
    Furthermore, the lower bound is attained with equality if and only if $\mathcal{C}$ is an MDS code. 
\end{thm}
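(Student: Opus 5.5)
We will compare the coverage process of $G$ with the classical coupon-collector process on the column indices. Let the columns be drawn i.i.d. uniformly from $[n]$. For $i=0,1,\dots,n$, let $\tau_i$ be the first time at which $i$ distinct column indices have been observed, with $\tau_0=0$. Since repeated indices contribute nothing new to the span, the rank of the sampled columns can only increase at the times $\tau_1,\tau_2,\dots$. Moreover, at time $\tau_i$ the set $S_i$ of observed indices has size $i$, so $\operatorname{rank}(G_{S_i})\le i$. Hence $\operatorname{span}(G_{S_i})=\F_q^k$ forces $i\ge k$, and therefore $T_G\ge \tau_k$ pointwise.

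By the standard coupon-collector decomposition, when $i$ distinct indices have been seen, the waiting time for a new one is geometric with success probability $(n-i)/n$. Thus
\[
\E[\tau_k]=\sum_{i=0}^{k-1}\frac{n}{n-i}=n(H_n-H_{n-k}),
\]
and taking expectations in $T_G\ge\tau_k$ gives the lower bound.

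For the equality case, $T_G-\tau_k\ge 0$, so $\E[T_G]=\E[\tau_k]$ holds if and only if $T_G=\tau_k$ almost surely. This means that the first $k$ distinct sampled columns always span $\F_q^k$. Every $k$-subset $S\subseteq[n]$ occurs with positive probability as the set of the first $k$ distinct indices, because the sampling is uniform with repetition. So equality holds if and only if every $k$ columns of $G$ are linearly independent. This is exactly the condition that every $k$-subset of coordinates is an information set, which is equivalent to $d=n-k+1$, i.e.\ to $\mathcal C$ being MDS.

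For the last equivalence, we use the standard fact that a codeword of weight at most $n-k$ vanishes on some $k$ coordinates. Such a codeword exists if and only if some $k\times k$ submatrix $G_S$ is singular: if $xG_S=0$ with $x\neq 0$, then $xG$ is a nonzero codeword vanishing on $S$, and the converse reverses this argument.

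The only delicate point is justifying almost-sure equality from equality of expectations and arguing that every $k$-subset is reachable; both are routine. No step poses a real obstacle.
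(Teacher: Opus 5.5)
Your proof is correct, including the equality case. Equality of expectations forces $T_G=\tau_k$ almost surely, every $k$-subset of indices occurs with positive probability as the first $k$ distinct ones, and nonsingularity of all $k\times k$ submatrices of $G$ is the standard characterization of MDS codes.

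The paper does not prove this statement itself; it quotes it from~\cite{barlev2025coveryourbases}. Your coupling $T_G\ge\tau_k$ is the same coupon-collector argument the paper uses in the proof of Lemma~\ref{lem:random-exact}. There, $D\ge k$ gives $\E[T]=n(H_n-H_{n-k})+n\sum_{t=k}^{n-1}\Pr(D>t)/(n-t)$, and this holds for any fixed matrix, so the surplus vanishes exactly when every $k$ columns span. The bound and equality case also follow directly from the formula of~\cite{gruica2025combinatorial}, using $\alpha(G,s)\le\binom{n}{s}$ and $\binom{n}{s}/\binom{n-1}{s}=n/(n-s)$.
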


\begin{rem}
    Theorem~\ref{MDS_bound_thm} provides a lower bound on the expectation and solves Problem~\ref{problem_1} for any choice of parameters $n, k$ and $q$ for which there exists an $[n,k]_q$ MDS code. In particular, assuming that the MDS conjecture holds, we can write
    \[
    \mathbb{E}_{\text{opt}}[n,k]_q = n(H_n - H_{n-k}) \text{ when } q \ge n-1.
    \]
\end{rem}

In~\cite{bertuzzo2026coverage}, the authors studied several code families and derived closed formulas for their expected number of reads. We mention the most important results for our purposes.

\begin{thm}[\text{see \cite[Theorem~3.1]{bertuzzo2026coverage}}]
    Let $\mathcal{C} \subseteq \mathbb{F}_q^n$ be the $q$-ary simplex code of dimension $k$, where $n = (q^k-1)/(q-1)$. We have 
    \begin{equation*} 
    \mathbb{E}[\mathcal{C}] = k + \sum_{i=1}^k \frac{q^{i-1}-1}{q^k - q^{i-1}}.
    \end{equation*}
\end{thm}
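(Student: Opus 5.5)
The simplex code's columns are exactly one nonzero representative of each of the $n=(q^k-1)/(q-1)$ one-dimensional subspaces of $\F_q^k$. So sampling columns with repetition is the same as sampling points of the projective space $\mathrm{PG}(k-1,q)$ uniformly with repetition, and the rank of the sampled columns equals the dimension of the span of the sampled points. I would decompose the coverage time as
\[
T_G=\sum_{i=1}^{k} X_i ,
\]
where $X_i$ is the number of draws made while the span of the drawn columns has dimension exactly $i-1$. The draw that ends phase $i$ raises the dimension to $i$; since a single new column raises the rank by at most one, every dimension value is visited. Linearity of expectation then gives $\E[G]=\sum_i \E[X_i]$.

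Next I compute each $\E[X_i]$. Suppose the current span $V$ has dimension $i-1$, and let $p_i$ be the probability that a fresh draw lies in $V$. This is the number of projective points of $V$ divided by $n$:
\[
p_i=\frac{(q^{i-1}-1)/(q-1)}{(q^k-1)/(q-1)}=\frac{q^{i-1}-1}{q^k-1}.
\]
This value depends only on $\dim V$ and not on which subspace $V$ is. That is the key point: it is the transitivity of $\mathrm{GL}_k(q)$ on subspaces of a given dimension, reflected in the fact that the simplex code uses every point exactly once. Consequently, conditional on the past, $X_i$ is geometric with success probability $1-p_i$, so
\[
\E[X_i]=\frac{1}{1-p_i}=\frac{q^k-1}{q^k-q^{i-1}}=1+\frac{q^{i-1}-1}{q^k-q^{i-1}}.
\]
The last equality is just $q^k-1=(q^k-q^{i-1})+(q^{i-1}-1)$.

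Summing over $i=1,\dots,k$ gives $k+\sum_{i=1}^k \frac{q^{i-1}-1}{q^k-q^{i-1}}$, which is the claimed formula. For $i=1$ the span is $\{0\}$ and $p_1=0$, so the first draw always succeeds, consistent with the $i=1$ term vanishing.

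The only step that needs care is justifying the geometric distribution. I would argue it via the strong Markov property: the process of the current span is a Markov chain, and from any state of dimension $i-1$ it leaves to a state of dimension $i$ with the same probability $1-p_i$. Once that is established, the rest is routine algebra.
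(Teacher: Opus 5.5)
Your proof is correct and is essentially the paper's own derivation. The paper cites this result, but it recovers it in the converse part of the proof of Theorem~\ref{thm:simplex_optimality}: uniformity on projective points gives $p(V)=\qbin{j}{1}{q}/\qbin{k}{1}{q}$ for every $j$-dimensional $V$, so the chain formula of Theorem~\ref{thm:general_formula} (itself built from your phase-wise geometric waiting times) collapses to $\sum_{j=0}^{k-1}\frac{q^k-1}{q^k-q^j}$. Your direct phase decomposition skips that sum over chains, which is legitimate precisely because $p_i$ depends only on the dimension, as you note.
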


\begin{thm}[\text{see \cite[Theorem~5.1]{bertuzzo2026coverage}}]
    Let $\mathcal{C} \subseteq \mathbb{F}_q^n$ be the $q$-ary Hamming code of redundancy $r$, where $n=(q^r-1)/(q-1)$. We have 
    \[
    \mathbb{E}[\mathcal{C}] = nH_n -  \sum_{\ell=1}^{r} \frac{1}{\binom{n-1}{n-\ell}}\frac{\prod_{i = 0}^{\ell-1} \frac{q^r-q^i}{q-1}}{\ell !}.
    \]
\end{thm}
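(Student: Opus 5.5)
Since this is a formula for the coverage depth of the $q$-ary Hamming code, I plan to start from a general identity for the expected coverage time in terms of the columns and then specialize it using duality. For any full-rank $G\in\F_q^{k\times n}$ and sampling with replacement, a standard argument conditions on the set of distinct columns seen so far. Let $s$ denote the current number of distinct indices seen. While the coverage process has not finished, it waits a Geometric$((n-s)/n)$ number of draws for a new index, with mean $n/(n-s)$. By symmetry, conditioned on having seen exactly $s$ distinct indices, that set is uniformly distributed among $s$-subsets. Hence
\[
\E[G]=\sum_{s=0}^{n-1}\frac{n}{n-s}\,\Pr\bigl[\rank(G_S)<k\bigr],
\]
where $S$ is a uniformly random $s$-subset of $[n]$. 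Equivalently,
\[
\E[G]=nH_n-\sum_{s=0}^{n-1}\frac{n}{n-s}\,\frac{N_s}{\binom{n}{s}},
\]
where $N_s$ is the number of $s$-subsets $S$ with $\rank(G_S)=k$.

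Next I will pass to the dual. If $\mathcal C$ is the Hamming code with $k=n-r$, its dual is the simplex code, with parity-check matrix $H\in\F_q^{r\times n}$ whose columns are one representative of each projective point of $\mathrm{PG}(r-1,q)$. The key matroid fact is that $\rank(G_S)=k$ if and only if the complement $[n]\setminus S$ indexes linearly independent columns of $H$: the complement of a spanning set of the matroid is independent in the dual matroid. So with $\ell=n-s$ we get $N_{n-\ell}=I_\ell$, the number of $\ell$-subsets of columns of $H$ that are linearly independent. Columns of $H$ are pairwise non-proportional, so $I_\ell$ counts ordered independent sequences of projective points divided by $\ell!$:
\[
I_\ell=\frac{1}{\ell!}\prod_{i=0}^{\ell-1}\frac{q^r-q^i}{q-1},
\]
which is nonzero only for $\ell\le r$. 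The $s=n$ term vanishes, and $\ell=0$ would correspond to $s=n$, so it is excluded.

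Finally I substitute and simplify: $\frac{n}{\ell}\cdot\frac{1}{\binom{n}{n-\ell}}=\frac{n}{\ell\binom{n}{\ell}}=\frac{1}{\binom{n-1}{\ell-1}}=\frac{1}{\binom{n-1}{n-\ell}}$. This yields exactly
\[
nH_n-\sum_{\ell=1}^r\frac{1}{\binom{n-1}{n-\ell}}\,\frac{1}{\ell!}\prod_{i=0}^{\ell-1}\frac{q^r-q^i}{q-1}.
\]

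The main obstacle is justifying the general formula rigorously. In particular, I must show that the conditional distribution of the seen set is uniform given its size and is independent of the waiting-time structure. I will do this by writing $T_G=\sum_{s\ge0}W_s\,\mathbf 1[\text{not spanning after } s \text{ distinct}]$ and using linearity of expectation together with exchangeability of the order in which distinct indices first appear, which is a uniformly random permutation.
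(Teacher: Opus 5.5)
Your proposal is correct and follows essentially the same route the paper uses (the statement is cited there, but its ingredients appear in Section~\ref{section:HammingCode}): the spanning-set identity $\E[G]=nH_n-\sum_{s=k}^{n-1}\alpha(G,s)/\binom{n-1}{s}$, the duality $\alpha(G,s)=I(G^\perp,n-s)$, and the count $\frac{1}{\ell!}\prod_{i=0}^{\ell-1}\frac{q^r-q^i}{q-1}$ of independent $\ell$-subsets of columns of the simplex matrix. The only difference is that you derive the spanning-set identity yourself from the coupon-collector decomposition rather than citing Gruica et al.; this mirrors how the paper computes $\E[T]=n\sum_{t}\Pr(D>t)/(n-t)$ in Lemma~\ref{lem:random-exact}, and your argument that the waiting times are independent of the uniformly random order of first appearances is sound.
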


Furthermore, a conjecture on the optimality of the $q$-ary simplex code has also been formulated~\cite[Conjecture~3.2]{bertuzzo2026coverage}. In this work, we establish this conjecture and prove that the $q$-ary Hamming code is optimal as well.


\section{A General Formula for the Expectation} \label{section:GeneralFormula}
In this section, we give a general formula for the expected number of reads one needs to draw to recover the information that has been stored. This formula will then be applied in Section~\ref{section:SimplexCode} to show the optimality of simplex codes.

From now on, let $X$ be a space such that $\dim_{\F_q}(X) = k$, and let $p$ be a probability distribution defined on $X$. For any $x \in X$, let $p(x)$ denote the probability associated to the element $x$, and for any subspace $V \le X$, let
\[
p(V) = \sum_{x \in V} p(x).
\]
In other words, this is the total probability mass of the subspace $V$. Note that, for any pair of subspaces $U,V \le X$ with $V \le U$, we have $0 \le p(V) \le p(U) \le 1$. Furthermore, we assume that $p(V) < 1$ for any $V \lneq X$, while $p(X) =1$. These conditions are equivalent to requiring that the support of $p$ spans $X$.

The process of drawing columns to recover the standard basis vectors can be visualized as building chains of dimension-increasing subspaces of the form
\[
0 = V_0 < V_1 < \dots < V_{k-1} < V_k = X, 
\]
with $\dim(V_i) = i$ for $0 \le i \le k$. In particular, the step corresponding to the transition from the subspace $V_i$ to the subspace $V_{i+1}$, for $0 \le i \le k-1$, coincides with drawing the $(i+1)$-th linearly independent column.

Moreover, let
\[
\mS = \{ (V_0, V_1, \dots, V_k) \, : \, 0 = V_0 < V_1 < \dots < V_k = X, \, \dim(V_i) = i \}
\]
be the set of all possible configurations.
The following auxiliary lemma focuses on the probability of a single step within the chain.

\begin{lem} \label{lem:dimension_increasing}
    Let $V \le X$ be a subspace with $\dim(V) = i$, where $0 \le i \le k-1$, and let $U \ge V$ be a subspace with $\dim(U) = i + 1$. 
    Suppose that the drawn columns first span the subspace $V$.
    The probability that $U$ is the next span we obtain is
    \[
    \Pr[V_{i+1} = U \, | \, V_i = V] = \frac{p(U) - p(V)}{1 - p(V)}.
    \]
\end{lem}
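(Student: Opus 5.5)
The plan is to condition on the state of the process at the moment the span first becomes $V$, and then to analyse the subsequent draws as i.i.d.\ samples from $p$. The draws are i.i.d., so by the strong Markov property (or simply by independence of the draws after the stopping time at which the span first equals $V$), the future draws are independent of the past and distributed according to $p$. Given $V_i = V$, the span stays equal to $V$ as long as the drawn elements lie in $V$. It moves to a new $(i+1)$-dimensional subspace exactly at the first draw $x \notin V$, and that subspace is $V + \langle x \rangle$. Hence the event $\{V_{i+1} = U\}$ coincides with the event that the first draw outside $V$ lands in $U \setminus V$.

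To compute this probability, sum over the time $t \ge 1$ of the first draw outside $V$. The first $t-1$ draws lie in $V$, with probability $p(V)^{t-1}$, and the $t$-th lies in $U\setminus V$, with probability $p(U)-p(V)$; here we use that the sets $V$ and $U \setminus V$ are disjoint and $V \subseteq U$, so $p(U\setminus V)=p(U)-p(V)$. Summing the geometric series gives
\[
\sum_{t\ge 1} p(V)^{t-1}\bigl(p(U)-p(V)\bigr) = \frac{p(U)-p(V)}{1-p(V)}.
\]
The series converges because $p(V)<1$ for $V \lneq X$, by the standing assumption on $p$. The same assumption guarantees that a draw outside $V$ occurs almost surely, so the conditioning is legitimate and the resulting probabilities over all $U \ge V$ of dimension $i+1$ sum to $1$. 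This is consistent with the fact that every $x\notin V$ lies in exactly one such $U$, namely $V+\langle x\rangle$.

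The only delicate step is justifying that conditioning on $V_i=V$ does not bias the later draws. I would handle this by noting that the event $\{V_i = V\}$ together with the random time $\tau_i$ at which the span first reaches dimension $i$ is determined by draws $1,\dots,\tau_i$. Conditioning on $\{\tau_i = s,\ V_i = V\}$ for each $s$ leaves draws $s+1, s+2,\dots$ i.i.d.\ with law $p$. The computation above then holds for each $s$, and averaging over $s$ gives the claim.
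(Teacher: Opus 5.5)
Your proposal is correct and follows the same route as the paper: sum over the time of the first draw outside $V$, with each term equal to $p(V)^{t-1}(p(U)-p(V))$, and evaluate the geometric series using $p(V)<1$. Your extra step of conditioning on the stopping time $\tau_i$, so that the later draws remain i.i.d.\ with law $p$, makes explicit a justification that the paper's proof uses without stating.
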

\begin{proof}
    For $r \ge 1$, suppose that the $r$-th draw allows us to get $U$. This means that the previous $r-1$ draws were columns inside $V$. This probability is equal to
    \[
    p(V)^{r-1} (p(U)-p(V)).
    \]
    To conclude, we sum over all possible values of $r$, getting
    \[
    \begin{split}
        \Pr[V_{i+1} = U \, | \, V_i = V] &= \sum_{r=1}^\infty p(V)^{r-1} (p(U)-p(V)) \\
        &= (p(U) - p(V)) \sum_{r=0}^\infty p(V)^r = \frac{p(U) - p(V)}{1 - p(V)}.
    \end{split}
    \]
\end{proof}

As a consequence, we can derive the probability of obtaining a specific chain.

\begin{cor}
    Let $S = (V_0, V_1, \dots, V_k) \in \mS$ be a chain of dimension-increasing subspaces, where $V_i \le X$ for every $0 \le i \le k$. We have
    \[
    \Pr[S] = 
    \prod_{i=1}^k \frac{p(V_i) - p(V_{i-1})}{1 - p(V_{i-1})}.
    \]
\end{cor}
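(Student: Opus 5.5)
The plan is to obtain the formula by chaining the single-step probabilities from Lemma~\ref{lem:dimension_increasing} with the multiplication rule for conditional probabilities. Let $S=(V_0,\dots,V_k)\in\mS$ be fixed. I regard the sequence of successive spans $V_0^{\mathrm{rand}}, V_1^{\mathrm{rand}},\dots,V_k^{\mathrm{rand}}$ produced by the drawing process as a random chain. Here $V_i^{\mathrm{rand}}$ denotes the span of the drawn columns at the moment its dimension first reaches $i$. The event $S$ is then the intersection of the events $\{V_i^{\mathrm{rand}}=V_i\}$ for $1\le i\le k$. The event for $i=0$ is certain, since $V_0=0$ always.

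Writing
\[
\Pr[S]=\prod_{i=1}^k \Pr\bigl[V_i^{\mathrm{rand}}=V_i \,\big|\, V_0^{\mathrm{rand}}=V_0,\dots,V_{i-1}^{\mathrm{rand}}=V_{i-1}\bigr],
\]
the key step is to argue that each factor depends only on $V_{i-1}$. This is a Markov property, and I expect it to be the main point requiring care. It holds because the draws are i.i.d. with distribution $p$. Fix the (random) time at which the span first equals $V_{i-1}$. This time is a stopping time for the draw sequence, so the draws after it are again i.i.d. with law $p$ and independent of the history up to that time, which is the strong Markov property for i.i.d. sequences. The next span is then $V_{i-1}+\langle x\rangle$, where $x$ is the first subsequent draw outside $V_{i-1}$. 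The law of that span is determined solely by $V_{i-1}$ and $p$. One can also see this directly: conditioned on the history, the computation in the proof of Lemma~\ref{lem:dimension_increasing} (geometric waiting inside $V_{i-1}$, then landing in $U\setminus V_{i-1}$) goes through verbatim.

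Each factor is therefore $\Pr[V_i^{\mathrm{rand}}=V_i\mid V_{i-1}^{\mathrm{rand}}=V_{i-1}]$, which Lemma~\ref{lem:dimension_increasing} evaluates as $\frac{p(V_i)-p(V_{i-1})}{1-p(V_{i-1})}$. This value is well defined because $p(V_{i-1})<1$ for the proper subspace $V_{i-1}\lneq X$. Multiplying these factors gives the claimed product. If some conditioning event has probability zero, then $\Pr[S]=0$, and the product also vanishes, since some $p(V_j)-p(V_{j-1})=0$ occurs in an earlier factor. So the identity holds in that case too.
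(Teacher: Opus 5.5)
Your proposal is correct and takes the same route as the paper. The paper gives no separate proof; it presents the corollary as an immediate consequence of Lemma~\ref{lem:dimension_increasing}, obtained by multiplying the one-step transition probabilities along the chain. Your explicit Markov-property argument from the i.i.d. draws and your handling of zero-probability conditioning events fill in details the paper leaves unstated.
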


On the other hand, given a chain $S \in \mS$, let $n_i \ge 1$ be the waiting time, after the selected columns first span the $(i-1)$-dimensional space $V_{i-1}$, until they first span $V_i$, where $1 \le i \le k$. In other words, $n_i$ is the number of draws needed for the $i$-th step in the chain.
The probability of getting the chain $S$ with waiting times $(n_1, \dots, n_k)$ is then
\[
\Pr[S, (n_1, \dots, n_k)] = \prod_{i=1}^k p(V_{i-1})^{n_i-1} \left(p(V_i) - p(V_{i-1})\right).
\]

We now have all the ingredients for the main result of this section.

\begin{thm} \label{thm:general_formula} 
     Let $T$ be the random variable that counts the number of draws needed to span $X$. We have
     \[
     \E[T] = \sum_{S = (V_0, \dots, V_k) \in \mS} \Pr[S] \sum_{j=1}^k \frac{1}{1-p(V_{j-1})}.
     \]
\end{thm}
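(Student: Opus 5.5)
We decompose $T$ along the chain of spans it passes through. By definition of the waiting times, on the event that the realized chain is $S=(V_0,\dots,V_k)$ we have $T=n_1+\dots+n_k$. Since the support of $p$ spans $X$ and $p(V)<1$ for every proper subspace, the process terminates almost surely and the events $\{\text{chain}=S\}$, $S\in\mS$, partition the sample space up to a null set. The law of total expectation then gives
\[
\E[T]=\sum_{S\in\mS}\Pr[S]\,\E\Bigl[\textstyle\sum_{j=1}^k n_j \,\Big|\, S\Bigr],
\]
so it suffices to show that $\E[n_j\mid S]=1/(1-p(V_{j-1}))$ for every $S$ with $\Pr[S]>0$.

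To compute this conditional expectation we use the joint law displayed just before the theorem. Summing
\[
\Pr[S,(n_1,\dots,n_k)]=\prod_{i=1}^k p(V_{i-1})^{n_i-1}\bigl(p(V_i)-p(V_{i-1})\bigr)
\]
over all $(n_1,\dots,n_k)\in\mathbb{Z}_{\ge1}^k$ factorizes into geometric series and recovers $\Pr[S]$ as in the Corollary. Dividing by $\Pr[S]$ shows that, conditionally on $S$, the $n_i$ are independent. Moreover, $n_i$ is geometric with success probability $1-p(V_{i-1})$:
\[
\Pr[n_i=m\mid S]=p(V_{i-1})^{m-1}\bigl(1-p(V_{i-1})\bigr),\qquad m\ge1.
\]
Its mean is therefore $1/(1-p(V_{i-1}))$. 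Linearity of expectation then yields the inner sum $\sum_{j=1}^k 1/(1-p(V_{j-1}))$, which completes the argument.

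The main point needing care is justifying the displayed joint probability. The $i$-th step with waiting time $n_i$ requires $n_i-1$ draws lying in $V_{i-1}$, each with probability $p(V_{i-1})$, followed by one draw in $V_i\setminus V_{i-1}$, which has probability $p(V_i)-p(V_{i-1})$. Draws are independent across steps, and the realized span after that last draw is exactly $V_i$. This is the same reasoning as in Lemma~\ref{lem:dimension_increasing}, applied step by step via the strong Markov property at the stopping times when the span first reaches $V_{i-1}$.

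A secondary point is the interchange of the sums over $S$ and over the $n_i$. All terms are nonnegative, so Tonelli's theorem allows the interchange without further assumptions.
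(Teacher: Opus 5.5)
Your proof is correct and follows essentially the same route as the paper. Both rest on the product form of $\Pr[S,(n_1,\dots,n_k)]$ and on geometric-series evaluations. The paper sums $(n_1+\cdots+n_k)\Pr[S,(n_1,\dots,n_k)]$ directly using $\sum_{n\ge 1} n x^{n-1}=1/(1-x)^2$, whereas you normalize by $\Pr[S]$ and obtain the same factor $1/(1-p(V_{j-1}))$ as the mean of a conditionally geometric waiting time; your explicit justification of the joint law and of the Tonelli interchange adds care that the paper only sketches.
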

\begin{proof}
The expectation can be expressed by summing over all the possible configurations of $\mS$ and all the possible $k$-tuples of waiting times as 
\[
\begin{split}
    \E[T] &= \sum_{S \in \mS} \sum_{n_1, \dots, n_k \ge 1} (n_1 + \cdots + n_k) \Pr[S, (n_1, \dots, n_k)] \\
    &= \sum_{S \in \mS} \sum_{n_1, \dots, n_k \ge 1} (n_1 + \cdots + n_k) \prod_{i=1}^k p(V_{i-1})^{n_i-1} \left(p(V_i) - p(V_{i-1})\right).
\end{split}
\]
Note that the sums involving the waiting times are independent. Moreover, the $i$-th factor of the product depends only on $n_i$ and is nonnegative, since $0 \le p(V_{i-1}) \le p(V_i) \le 1$ for every $1 \le i \le k$.
We can then exchange the order of sums and products, obtaining
\[
\begin{split}
    \E[T] =& \sum_{S \in \mS} \sum_{j=1}^k \sum_{n_1= 1}^\infty \cdots \sum_{n_k=1}^\infty n_j \prod_{i=1}^k p(V_{i-1})^{n_i-1} \left(p(V_i) - p(V_{i-1})\right) \\
    =& \sum_{S \in \mS} \sum_{j=1}^k \left( \left(p(V_j) - p(V_{j-1})\right) \sum_{n_j=1}^\infty n_j p(V_{j-1})^{n_j-1} \right)\prod_{\substack{i=1 \\ i \neq j}}^k \left(p(V_i) - p(V_{i-1})\right) \sum_{n_i = 1}^\infty p(V_{i-1})^{n_i-1}.
\end{split}
\]
Since $p(V_i) < 1$ for all $0 \le i \le k-1$, we can apply the convergence properties of geometric series
\[
\sum_{n=1}^\infty x^{n-1} = \frac{1}{1-x} \quad \text{and} \quad \sum_{n=1}^\infty nx^{n-1} = \frac{1}{(1-x)^2}.
\]
Therefore,
\[
\begin{split}
    \E[T] &= \sum_{S \in \mS} \sum_{j=1}^k \frac{p(V_j) - p(V_{j-1})}{(1-p(V_{j-1}))^2} \prod_{\substack{i=1 \\ i \neq j}}^k \frac{p(V_i) - p(V_{i-1})}{1-p(V_{i-1})} \\
    &= \sum_{S \in \mS} \sum_{j=1}^k \frac{1}{1-p(V_{j-1})} \prod_{i=1}^k \frac{p(V_i) - p(V_{i-1})}{1-p(V_{i-1})} \\
    &= \sum_{S \in \mS} \sum_{j=1}^k \frac{\Pr[S]}{1-p(V_{j-1})} = \sum_{S \in \mS} \Pr[S] \sum_{j=1}^k \frac{1}{1-p(V_{j-1})}.
\end{split}
\]
\end{proof}

We now show that applying Theorem~\ref{thm:general_formula} to any first-order $q$-ary Reed-Muller code allows us to derive a significantly simpler formula for their expectation than the one presented in~\cite[Theorem 7.3]{bertuzzo2026coverage}. Recall that, for an integer $m \ge 1$, the first-order $q$-ary Reed-Muller code $\mathcal{C} \subseteq \F_q^n$ has length $n = q^m$ and dimension $k=m+1$; see~\cite[Definition~15]{jurrius2012weightenum}.

It is convenient to introduce some auxiliary quantities that will also be used in the subsequent sections. Given a generator matrix $G \in \F_q^{k \times n}$, the drawing procedure induces a probability distribution $p_G$ on $\mathbb{F}_q^k$ given by 
\[ p_G(x) = \frac{|\{i\in[n]:g_i=x\}|}{n}. \] 
For a subspace $V\leq\mathbb{F}_q^k$, let $m_G(V) = |\{i\in[n]:g_i\in V\setminus\{0\}\}|$, and let $z_G=|\{i\in[n]:g_i=0\}|$.
Then 
\[ p_G(0)=\frac{z_G}{n} \qquad\text{and}\qquad p_G(V)=\frac{m_G(V)+z_G}{n}. \] 


\begin{cor} 
    Let $\mathcal{C} \subseteq \F_q^n$ be the first-order $q$-ary Reed-Muller code of dimension $m+1$, where $n = q^m$.
    \[
    \E[\mathcal{C}] = m+1 + \sum_{j=1}^{m} \frac{1}{q^{j}-1}.
    \]
\end{cor}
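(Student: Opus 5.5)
The plan is to apply Theorem~\ref{thm:general_formula} with $X=\F_q^{m+1}$ and $p=p_G$, where $G$ is the standard generator matrix of the first-order Reed--Muller code. Its columns are the vectors $(1,x)^\top$ for $x\in\F_q^m$, each appearing exactly once, so $n=q^m$, $z_G=0$, and $p_G$ is uniform on the affine hyperplane $A=\{(1,x):x\in\F_q^m\}$. The key observation is that the inner sum $\sum_{j=1}^{k}\frac{1}{1-p(V_{j-1})}$ takes the same value on every chain $S$ with $\Pr[S]>0$. Since these probabilities sum to $1$, the expectation then equals that common value.

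First I will restrict attention to chains with $\Pr[S]>0$; chains with $\Pr[S]=0$ contribute nothing to the formula. By the Corollary, $\Pr[S]>0$ forces $p(V_i)>p(V_{i-1})$ for every $i$, so $V_i\setminus V_{i-1}$ contains a point of $A$. By induction, $V_i=V_{i-1}+\langle a_i\rangle$ with $a_i\in A$, so each $V_i$ with $i\ge 1$ is spanned by $i$ linearly independent points of $A$.

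Next I need to count $|V_i\cap A|$ for such a subspace. A subspace of dimension $i$ spanned by points of $A$ is not contained in the hyperplane $\{x_0=0\}$. Hence $V_i\cap\{x_0=0\}$ has dimension $i-1$, and $V_i\cap A$ is a coset of it, giving exactly $q^{i-1}$ points. Therefore $p(V_0)=0$ and $p(V_i)=q^{i-1}/q^m$ for $1\le i\le m+1$, independently of the chain. This is the only structural step; it is short, and I expect it to be the only real point to get right, namely that the chains arising with positive probability are exactly the ones with this rigid mass profile.

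Finally I will compute the constant. The term $j=1$ contributes $1$. For $j=t+2$ with $0\le t\le m-1$, the term is
\[
\frac{q^m}{q^m-q^t}=1+\frac{1}{q^{m-t}-1}.
\]
Summing over $t$ gives $m+\sum_{j=1}^m\frac{1}{q^j-1}$, so the total is $m+1+\sum_{j=1}^{m}\frac{1}{q^j-1}$, as claimed.
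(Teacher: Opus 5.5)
Your proposal is correct and follows the paper's proof essentially step for step. Like the paper, you apply Theorem~\ref{thm:general_formula} to the standard generator matrix with columns $(1,x)$. You then show that every chain with positive probability has $V_i\not\subseteq\{x_0=0\}$ for $i\ge 1$, so $V_i\cap A$ is a coset of an $(i-1)$-dimensional space and $p(V_i)=q^{i-1-m}$. Finally you sum the resulting chain-independent inner sum. Your justification via $\Pr[S]>0\Rightarrow p(V_i)>p(V_{i-1})$ simply makes explicit the paper's remark that all possible chains avoid the hyperplane.
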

\begin{proof}
A natural generator matrix for $\mathcal{C}$ is 
\[
G = 
\begin{pmatrix}
    1 & 1 & \cdots & 1 \\
    v_1 & v_2 & \cdots & v_{q^m} 
\end{pmatrix},
\]
where the $v_i \in \F_q^m$ are pairwise distinct. In other words, each vector of $\F_q^m$ appears exactly once. The columns of $G$ are all and only the points of the affine chart $\mathcal{A} = \{ (1,v) \, : \, v \in \F_q^m \}$. Projectively, these are the normalized representative of the points of $PG(m,q) \setminus PG(H)$, where $H = \{ (0,v) \, : \, v \in \F_q^m\}$.

Let $V \le F_q^{m+1}$ be a $j$-dimensional subspace. We want to count the quantity $m_{G}(V)$, which corresponds to the cardinality of $V \cap \mathcal{A}$. We distinguish two cases based on the containment relation with $H$. If $V \subseteq H$, then $m_{G}(V) = 0$. 

On the other hand, if $V \nsubseteq H$, consider the projection map $\pi_0 : \F_q^{m+1} \rightarrow \F_q$ onto the first coordinate, namely $\pi_0(v) = x_0$ for every vector $v = (x_0, \dots, x_m)$. We have $\ker(\pi_0) = H$ and, since $V \nsubseteq H$, the map restricted to $V$ is nonzero and, therefore, surjective. The rank-nullity theorem yields
\[
\dim(\ker(\pi_0\vert_V)) = \dim(V) - \dim(\Ima(\pi_0\vert_V)) = j-1,
\]
or, equivalently, $\dim(V \cap H) = j-1$. 
Let $v' \in V \cap \mathcal{A}$ (such a vector exists since the map $\pi_0\vert_V$ is surjective). It is not difficult to see that
\[
V \cap \mathcal{A} = v'+ (V \cap H),
\]
which means that $V \cap \mathcal{A}$ is an affine translation of $V \cap H$. Since $\dim(V \cap H) = j-1$, we have that 
\[
m_{G}(V) = |V \cap \mathcal{A}| = q^{j-1}.
\]
Therefore, 
\[
p_{G}(V) = \begin{cases}
0 &\text{if } V \subseteq H,\\
q^{j-1-m} &\text{otherwise}.
\end{cases}
\]
Moreover, since every drawn column has the first coordinate equal to $1$, every possible chain of $\mathcal{S}$ is of the form $(0, V_1, \dots, V_{m+1})$, with $V_j \nsubseteq H$ for all $j \ge 1$.
Thus, we have 
\[
p_{G}(V_0) = 0 \quad \mbox{and} \quad p_{G}(V_j) = q^{j-1-m} \quad \mbox{for $1 \le j \le m+1$.}
\]
Combining all of the above yields
\[
\begin{split}
    \E[\mathcal{C}] &= \sum_{S \in \mS} \Pr[S] \left( 1 + \sum_{j=1}^{m} \frac{1}{1-q^{j-1-m}} \right) \\
    &= 1 + \sum_{j=1}^{m} \frac{q^{m+1-j}}{q^{m+1-j}-1} = m+1 + \sum_{j=1}^{m} \frac{1}{q^j-1}.
\end{split}
\]
\end{proof}

\section{Optimality of the $q$-ary Simplex Code} \label{section:SimplexCode}
This section is devoted to the optimality of the $q$-ary simplex code, which was conjectured in~\cite{bertuzzo2025coverage}. Concurrent with the preparation of this manuscript, Kong et al.~\cite{kong2026coveragedepthproblemdistributed} independently presented a similarity proof of the optimality of the $q$-ary simplex code.

We first apply the formula obtained in Theorem~\ref{thm:general_formula} to establish a general lower bound on the expectation; moreover, we provide the cases where equality holds. This result will then be translated into the coding theory language to prove the optimality of the $q$-ary simplex code.

First, we introduce an additional notation: For a space $X$ of dimension $k$ over $\F_q$, we denote the projective space obtained from $X$ by $PG(k-1,q)$. For an integer \(m\geq1\), let \(\mathcal{P}_m\) denote the set of one-dimensional subspaces of \(\mathbb{F}_q^m\). Every nonzero vector belongs to a unique element of \(\mathcal{P}_m\), and
\[
|\mathcal{P}_m|=\frac{q^m-1}{q-1}.
\]

\begin{thm} \label{thm:simplex_optimality}
    Let $T$ be the random variable that counts the number of draws needed to span $X$.
    We have
    \begin{equation} \label{eq:simplex_lower_bound}
        \E[T] \ge \sum_{j=0}^{k-1} \frac{q^k-1}{q^k-q^j}.
    \end{equation}
    In particular, the bound is attained with equality if and only if the probability distribution on $X$ is such that $p(0) = 0$ and the induced distribution on $PG(k-1,q)$ is uniform. Equivalently, $p(0) = 0$ and $p(L) = 1/\qbin{k}{1}{q}$ for every $1$-dimensional subspace $L \le X$.
\end{thm}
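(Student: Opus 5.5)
The plan is to argue by induction on $k=\dim X$, conditioning on the line spanned by the first nonzero draw. The key point is that the bound should also be tracked for distributions with positive mass at $0$. Write
\[
B_k=\sum_{j=0}^{k-1}\frac{q^k-1}{q^k-q^j},
\qquad N=\qbin{k}{1}{q}=\frac{q^k-1}{q-1}.
\]
The claim to prove by induction is: for every distribution $p$ on a $k$-dimensional space $X$ whose support spans $X$,
\[
\E[T]\ \ge\ \frac{B_k}{1-p(0)},
\]
with equality if and only if $p$ induces the uniform distribution on $PG(k-1,q)$. This implies Theorem~\ref{thm:simplex_optimality}, because $1/(1-p(0))\ge 1$ with equality iff $p(0)=0$.

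For the reduction to renormalized distributions, note that draws of $0$ never contribute to the span. Hence $T$ has the law of a sum of independent geometric delays (each with mean $1/(1-p(0))$), one attached to each draw of the process driven by $\tilde p(x)=p(x)/(1-p(0))$ on $X\setminus\{0\}$. By Wald's identity, $\E[T]=\E[\tilde T]/(1-p(0))$. It therefore suffices to prove $\E[\tilde T]\ge B_k$ when $p(0)=0$.

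The base case $k=1$ is then immediate, since $\E[\tilde T]=1=B_1$.

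For the inductive step, assume $p(0)=0$. The first draw lands in the line $L$ with probability $p(L)$. After that, a set of vectors containing a spanning vector of $L$ spans $X$ iff their images span $X/L$. The remaining draws are i.i.d. with law equal to the pushforward $p_L$ of $p$ to the $(k-1)$-dimensional space $X/L$. Its mass at zero is $p_L(0)=p(L)$, and its support spans $X/L$. Thus
\[
\E[T]=1+\sum_{L}p(L)\,\E[T'_L],
\]
where $T'_L$ is the coverage time for $p_L$. Applying the induction hypothesis (in the zero-mass form) gives
\[
\E[T]\ \ge\ 1+B_{k-1}\sum_{L}\frac{p(L)}{1-p(L)}.
\]

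The main step is to minimize $\sum_L \phi(p(L))$, where $\phi(w)=w/(1-w)$, over $w_L\ge 0$ with $\sum_L w_L=1$ and $N$ lines. Since $\phi$ is strictly convex on $[0,1)$, Jensen's inequality gives
\[
\sum_L \phi(w_L)\ \ge\ N\phi(1/N)=\frac{N}{N-1}=\frac{q^k-1}{q^k-q},
\]
with equality iff all $w_L=1/N$. Terms with $w_L=1$ cannot occur, because the support spans $X$ and $k\ge 2$. It then remains to verify
\[
1+\frac{q^k-1}{q^k-q}\,B_{k-1}=B_k .
\]
This follows termwise from
\[
\frac{q^k-1}{q^k-q}\cdot\frac{q^{k-1}-1}{q^{k-1}-q^j}=\frac{q^k-1}{q^k-q^{j+1}},
\]
together with the $j=0$ term of $B_k$ being $1$.

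For the equality characterization, equality in the chain forces $p(0)=0$ and, by strict convexity, $p(L)=1/N$ for all $L$. Conversely, if $p$ is uniform on lines, each quotient $p_L$ is uniform on the lines of $X/L$: every line of $X/L$ lifts to a plane containing $L$, which carries $q$ lines other than $L$. So the induction hypothesis holds with equality as well. I expect the only delicate point to be justifying the quotient/Wald reduction cleanly, in particular that conditioning on the first line yields exactly an i.i.d. process on $X/L$. The convexity step itself is routine.
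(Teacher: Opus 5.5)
Your proof is correct, and it takes a genuinely different route from the paper's. The paper never inducts on $\dim X$. It starts from its chain formula $\E[T]=\sum_{j=0}^{k-1}\E\bigl[1/(1-p(V_j))\bigr]$ and applies Jensen level by level to get $\E[T]\ge\sum_j 1/(1-A_j)$, where $A_j=\E[p(V_j)]$. It then proves $A_j\ge \qbin{j}{1}{q}/\qbin{k}{1}{q}$ by induction on $j$, using Cauchy--Schwarz over the $\qbin{k-j}{1}{q}$ possible successors of $V_j$. Equality is read off from level $j=0$, which forces $p(0)=0$, and level $j=1$, which forces $\sum_L p(L)^2=1/\qbin{k}{1}{q}$.

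You instead peel off the first line and pass to $X/L$. That needs your strengthened hypothesis $\E[T]\ge B_k/(1-p(0))$, precisely because every quotient distribution $p_L$ carries mass $p(L)$ at zero. After that, a single Jensen step for $w\mapsto w/(1-w)$ and the identity $1+\frac{q^k-1}{q^k-q}B_{k-1}=B_k$ close the induction. The equality case then follows from strict convexity plus your check that uniformity on lines descends to $X/L$.

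The step you flag as delicate is fine. The nonzero draws form an i.i.d.\ $\tilde p$-sequence independent of the i.i.d.\ geometric gaps, so $\E[T]=\E[\tilde T]/(1-p(0))$ by independence alone. After the first nonzero draw $x_1$, the later draws are independent of $x_1$, and $\langle x_1,\dots,x_t\rangle=X$ if and only if their images span $X/\langle x_1\rangle$.

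What your route buys:
\begin{itemize}
\item It is shorter and self-contained, with no chain-sum formalism and one convexity inequality instead of Jensen plus Cauchy--Schwarz.
\item It yields the sharper bound $B_k/(1-p(0))$, with an explicit equality condition for distributions with mass at zero.
\item This zero-mass form is exactly what the paper's Reed--Muller argument in Section~\ref{section:ReedMuller} effectively invokes: it applies the theorem to the renormalized quotient distribution and rescales by $n/(n-d_j)$.
\end{itemize}

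What the paper's route buys: the level-wise bound $\E[p(V_j)]\ge \qbin{j}{1}{q}/\qbin{k}{1}{q}$ for every $j$. This is a statement about the whole random chain of spans, obtained without strengthening the hypothesis, and the argument reuses the general formula of Theorem~\ref{thm:general_formula}.
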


To show this result, we need the following auxiliary lemma.
\begin{lem} \label{lem:simplex_optimality}
    Let $p$ be a probability distribution on $X$. For every $0 \le j \le k-1$, we have
    \[
    \sum_{\substack{V \le X \\ \dim(V) = j}} \sum_{\substack{S = (V_0, \dots, V_k) \in \mS \\ V_j = V}} p(V) \Pr[S] \ge \frac{\qbin{j}{1}{q}}{\qbin{k}{1}{q}}.
    \]
\end{lem}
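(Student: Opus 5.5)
The left-hand side of Lemma~\ref{lem:simplex_optimality} is exactly $a_j := \E[p(V_j)]$, the expected probability mass of the $j$-th subspace of the random chain. Indeed, every chain $S\in\mS$ has exactly one $j$-dimensional member, so grouping the chains by their $j$-th entry gives
\[
\sum_{\dim V = j}\ \sum_{S:\,V_j=V} p(V)\Pr[S] = \sum_{S\in\mS}\Pr[S]\,p(V_j).
\]
I would prove $a_j \ge \qbin{j}{1}{q}/\qbin{k}{1}{q}$ by induction on $j$, using a one-step recursion derived from Lemma~\ref{lem:dimension_increasing} together with Cauchy--Schwarz.

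\emph{Base case.} For $j=0$ we have $a_0 = p(0)\ge 0 = \qbin{0}{1}{q}/\qbin{k}{1}{q}$.

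\emph{One-step bound.} Fix $V$ with $\dim V=j<k$. The $(j+1)$-dimensional subspaces $U\ge V$ correspond to the points of the quotient $X/V$, so there are $N_j:=\qbin{k-j}{1}{q}$ of them. Moreover, the sets $U\setminus V$ partition $X\setminus V$. Hence
\[
\sum_{U\ge V,\ \dim U=j+1} \bigl(p(U)-p(V)\bigr) = 1-p(V).
\]
By Lemma~\ref{lem:dimension_increasing},
\[
\E[p(V_{j+1})\mid V_j=V] = p(V) + \sum_{U} \frac{(p(U)-p(V))^2}{1-p(V)}.
\]
Cauchy--Schwarz gives $\sum_U (p(U)-p(V))^2 \ge (1-p(V))^2/N_j$, and therefore
\[
\E[p(V_{j+1})\mid V_j=V] \ge p(V) + \frac{1-p(V)}{N_j}.
\]
This holds whenever $p(V)<1$, which is guaranteed for proper subspaces. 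The right-hand side is affine in $p(V)$, so taking expectations over $V_j$ yields
\[
a_{j+1}\ge a_j + \frac{1-a_j}{N_j}.
\]

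\emph{Induction step.} The map $a\mapsto a+(1-a)/N_j = (1-1/N_j)a + 1/N_j$ is nondecreasing, since $N_j\ge 1$. So the hypothesis $a_j\ge \qbin{j}{1}{q}/\qbin{k}{1}{q}$ gives
\[
a_{j+1}\ge \frac{\qbin{j}{1}{q}}{\qbin{k}{1}{q}} + \frac{\qbin{k}{1}{q}-\qbin{j}{1}{q}}{\qbin{k}{1}{q}\,\qbin{k-j}{1}{q}}.
\]
We have $\qbin{k}{1}{q}-\qbin{j}{1}{q} = q^j\qbin{k-j}{1}{q}$, so the second fraction equals $q^j/\qbin{k}{1}{q}$. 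Since $\qbin{j}{1}{q}+q^j=\qbin{j+1}{1}{q}$, this gives $a_{j+1}\ge \qbin{j+1}{1}{q}/\qbin{k}{1}{q}$, completing the induction.

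The only delicate points are the identification of the lemma's double sum with $\E[p(V_j)]$ and the monotonicity of the recursion map, which lets the affine lower bound propagate through the expectation. The Cauchy--Schwarz step is the substantive one. It is tight exactly when all $p(U)-p(V)$ are equal, which is also what should later give the equality case of Theorem~\ref{thm:simplex_optimality}.
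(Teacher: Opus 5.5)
Your proposal is correct and follows essentially the same route as the paper. Both arguments use induction on $j$ with base case $p(0)\ge 0$. Both derive the conditional bound $\E[p(V_{j+1})\mid V_j=V]\ge p(V)+(1-p(V))/\qbin{k-j}{1}{q}$ from Lemma~\ref{lem:dimension_increasing} and Cauchy--Schwarz, average it over $V_j$, and close with the identity $\qbin{k}{1}{q}-\qbin{j}{1}{q}=q^j\qbin{k-j}{1}{q}$. The differences are cosmetic: you write the averaging step directly as $a_{j+1}\ge a_j+(1-a_j)/N_j$, and you state explicitly the monotonicity of this affine map, which the paper uses implicitly when it applies the induction hypothesis.
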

\begin{proof}
From now on denote by 
\[
    A_j = \sum_{\substack{V \le X \\ \dim(V) = j}} \sum_{\substack{S = (V_0, \dots, V_k) \in \mS \\ V_j = V}} p(V) \Pr[S].
\]
We proceed by induction on $j$. For $j=0$, the only $0$-dimensional space is $V = 0$. Thus,
\[
\begin{split}
    A_0 &= \sum_{S = (0, V_1, \dots, V_k) \in \mS} p(0) \Pr[S] \\
    &= p(0) \sum_{S = (0, V_1, \dots, V_k) \in \mS} \Pr[S] \\
    &=p(0) \ge \frac{\qbin{0}{1}{q}}{\qbin{k}{1}{q}}.
\end{split}
\]
Assume now that $A_j \ge \qbin{j}{1}{q}/\qbin{k}{1}{q}$ for some $0 \le j \le k-2$. We fix a $j$-dimensional subspace $V$ as the $j$-th subspace of a chain $(V_0, V_1, \dots, V_k)$ and compute the expected probability mass of the next subspace in the chain, conditional on $V_j = V$. We have
\[
\begin{split}
    \E[p(V_{j+1}) \, | \, V_j = V] &= \sum_{\substack{U \ge V \\ \dim(U) = j+1}} p(U) \Pr[V_{j+1}= U \, | \, V_j = V] \\
    &= \sum_{\substack{U \ge V \\ \dim(U) = j+1}} p(U) \frac{p(U) - p(V)}{1 - p(V)} \\
    &= \frac{1}{1-p(V)} \sum_{\substack{U \ge V \\ \dim(U) = j+1}} (p(U)-p(V))^2 + \frac{p(V)}{1-p(V)} \sum_{\substack{U \ge V \\ \dim(U) = j+1}} (p(U)-p(V)) \\
    &= \frac{1}{1-p(V)} \sum_{\substack{U \ge V \\ \dim(U) = j+1}} (p(U)-p(V))^2 + p(V) \\
    &\ge \frac{1-p(V)}{\qbin{k-j}{1}{q}} + p(V),
\end{split}
\]
where we used Lemma~\ref{lem:dimension_increasing} in the second equality, the identity 
\[
\sum_{\substack{U \ge V\\ \dim(U) = j+1}}(p(U)-p(V)) = 1-p(V)
\]
in the second-to-last equality, and the Cauchy-Schwarz inequality
\[
\left( \sum_{\substack{U \ge V \\ \dim(U) = j+1}} (p(U)-p(V))^2 \right) \left( \sum_{\substack{U \ge V \\ \dim(U) = j+1}} 1^2 \right) \ge \left( \sum_{\substack{U \ge V \\ \dim(U) = j+1}} (p(U)-p(V)) \right)^2
\]
in the last step (recall that there are exactly $\qbin{k-j}{1}{q}$ subspaces of dimension $j+1$ containing a given $j$-dimensional subspace).

From now on, we focus on the inequality
\[
    \sum_{\substack{U \ge V \\ \dim(U) = j+1}} p(U) \frac{p(U) - p(V)}{1 - p(V)} \ge \frac{1-p(V)}{\qbin{k-j}{1}{q}} + p(V).
\]
Multiplying both sides by $\Pr[V_j = V]$ and then summing over all $j$-dimensional spaces yields
\begin{equation} \label{eq:auxiliary_lemma_simplex}
\begin{split}
    \sum_{\substack{V \le X \\ \dim(V) =j}} \Pr[V_j = V] &\sum_{\substack{U \ge V \\ \dim(U) = j+1}} p(U) \frac{p(U) - p(V)}{1 - p(V)} \\
    &\ge \sum_{\substack{V \le X \\ \dim(V) =j}} \Pr[V_j = V] \left( \frac{1-p(V)}{\qbin{k-j}{1}{q}} + p(V) \right).
\end{split}
\end{equation}
Let $L(j)$ and $R(j)$ denote the left hand side and the right hand side, respectively, of Equation~\eqref{eq:auxiliary_lemma_simplex}.
The law of total probability gives
\[
\begin{split}
    \Pr[V_{j+1}=U] &= \sum_{\substack{V \le U \\ \dim(V) = j}} \Pr[V_j = V] \Pr[V_{j+1}=U \, | \, V_j = V] \\
    &= \sum_{\substack{V \le U \\ \dim(V) = j}} \Pr[V_j = V] \frac{p(U)-p(V)}{1-p(V)},
\end{split}
\]
which can be applied to $L(j)$ as follows:
\[
\begin{split}
    L(j) &=  \sum_{\substack{U \le X \\ \dim(U) =j+1}} p(U)\sum_{\substack{V \le U \\ \dim(V) = j}} \Pr[V_j = V]  \frac{p(U) - p(V)}{1 - p(V)} \\
    &= \sum_{\substack{U \le X \\ \dim(U) =j+1}} p(U)\Pr[V_{j+1}=U] \\
    &= \sum_{\substack{U \le X\\ \dim(U) =j+1}} \sum_{\substack{S = (V_0, \dots, V_k) \in \mS \\ V_{j+1} = U}} p(U)\Pr[S] = A_{j+1}.
\end{split}
\]
On the other hand, 
\[
\begin{split}
    R(j) &=  \frac{1- \sum_{\substack{V \le X \\ \dim(V) =j}} p(V) \Pr[V_j = V]}{\qbin{k-j}{1}{q}} + \sum_{\substack{V \le X \\ \dim(V) =j}} p(V) \Pr[V_j = V]\\
    &=  \frac{1}{\qbin{k-j}{1}{q}} + \left( 1 -\frac{1}{\qbin{k-j}{1}{q}}  \right) \sum_{\substack{V \le X \\ \dim(V) =j}} p(V) \Pr[V_j = V] \\
    &= \frac{1}{\qbin{k-j}{1}{q}} + \left( 1 -\frac{1}{\qbin{k-j}{1}{q}}  \right) \sum_{\substack{V \le X \\ \dim(V) =j}} \sum_{\substack{S = (V_0, \dots, V_k) \in \mS \\ V_j = V}}  p(V) \Pr[S] \\
    &\ge \frac{1}{\qbin{k-j}{1}{q}} + \left( 1 -\frac{1}{\qbin{k-j}{1}{q}}  \right) \frac{\qbin{j}{1}{q}}{\qbin{k}{1}{q}} \\
    &= \frac{1-\qbin{j}{1}{q} / \qbin{k}{1}{q}}{\qbin{k-j}{1}{q}} + \frac{\qbin{j}{1}{q}}{\qbin{k}{1}{q}}  \\
    &= \frac{q^j}{\qbin{k}{1}{q}} + \frac{\qbin{j}{1}{q}}{\qbin{k}{1}{q}} = \frac{\qbin{j+1}{1}{q}}{\qbin{k}{1}{q}},
\end{split}
\]
where we used the induction hypothesis in the third-to-last step and the combinatorial identity
\[
\qbin{k}{1}{q} - \qbin{j}{1}{q} = q^j \qbin{k-j}{1}{q}
\]
in the second-to-last step.
Combining all of the above gives
\[
A_{j+1} \ge \frac{\qbin{j+1}{1}{q}}{\qbin{k}{1}{q}}.
\]
\end{proof}

\begin{proof}[Proof of Theorem \ref{thm:simplex_optimality}]
We can rewrite the formula from Theorem~\ref{thm:general_formula} conditioning on the $j$-th space in the chain $S = (V_0, \dots, V_k)$, namely
\begin{equation} \label{eq:gen_formula_expectation}
    \begin{split}
    \E[T] &= \sum_{j=0}^{k-1} \sum_{S = (V_0, \dots, V_k) \in \mS} \frac{\Pr[S]}{1-p(V_j)} \\
    &= \sum_{j=0}^{k-1} \sum_{\substack{V \le X \\ \dim(V) = j}} \sum_{\substack{S = (V_0, \dots, V_k) \in \mS \\ V_j = V}} \frac{\Pr[S]}{1-p(V)}.
\end{split}
\end{equation}
Since the function $\varphi(x) = 1/(1-x)$ is convex for $0 \le x <1$, Jensen's inequality gives
\begin{equation} \label{eq:optimization_jensen}
    \E[T] \ge \sum_{j=0}^{k-1}\frac{1}{1-\sum_{\substack{V \le X \\ \dim(V) = j}} \sum_{\substack{S = (V_0, \dots, V_k) \in \mS \\ V_j = V}} p(V) \Pr[S]  }.
\end{equation}
Applying Lemma~\ref{lem:simplex_optimality} to Equation~\eqref{eq:optimization_jensen} yields
\[
    \E[T]  \ge \sum_{j=0}^{k-1} \frac{1}{1-\frac{\qbin{j}{1}{q}}{\qbin{k}{1}{q}}} = \sum_{j=0}^{k-1} \frac{q^k-1}{q^k-q^j}.
\]

It remains to characterize equality. 
From now on, for $0\leq j\leq k-1$, let 
\[
B_j = \sum_{\substack{V\leq X\\\dim(V)=j}} \sum_{\substack{S\in\mathcal{S}\\V_j=V}} \frac{\Pr[S]}{1-p(V)}. 
\] 
The preceding argument shows, term by term, that 
\begin{equation} \label{eq:inequalities_chain}
    B_j \geq \frac{1}{1-A_j} \geq \frac{1}{ 1- \dfrac{\qbin{j}{1}{q}} {\qbin{k}{1}{q}} }. 
\end{equation}
Suppose that equality holds in~\eqref{eq:simplex_lower_bound}. 
Therefore, the inequalities in~\eqref{eq:inequalities_chain} must also be equalities. For $j=0$, we have $V_0=\{0\}$ with probability one and hence
\[ B_0=\frac{1}{1-p(0)}. \] 
The corresponding term in the lower bound is $1$, which implies $p(0)=0$.

We next consider $j=1$. Since $p(0)=0$, the first draw is nonzero with probability one. Thus, for every $L\in\mathcal{P}_k$, we have $\Pr[V_1=L]=p(L).$ 
It follows that 
\[ A_1 
= \sum_{L\in\mathcal{P}_k}p(L)^2. \] 
The fact that the equalities must hold in~\eqref{eq:inequalities_chain} implies that $A_1 = 1/\qbin{k}{1}{q}.$ 
Therefore, 
\[ \sum_{L\in\mathcal{P}_k}p(L)^2 = \frac{1}{\qbin{k}{1}{q}}. \] 
On the other hand, 
the Cauchy-Schwarz inequality gives
\[ \sum_{L\in\mathcal{P}_k}p(L)^2 \geq \frac{1}{|\mathcal{P}_k|} \left(\sum_{L\in\mathcal{P}_k}p(L)\right)^2 = \frac{1}{\qbin{k}{1}{q}}, \] 
and equality holds if and only if all the values $p(L)$ are equal. Consequently, $p(L)=1/\qbin{k}{1}{q}$ for every $L\in\mathcal{P}_k$.

Conversely, suppose that $p(0)=0$ and that $p(L)=1/\qbin{k}{1}{q}$ 
for every one-dimensional subspace $L\leq X$. If $V\leq X$ has dimension $j$, then $V$ contains exactly $\qbin{j}{1}{q}$ one-dimensional subspaces. Therefore, 
\[ p(V) = \frac{\qbin{j}{1}{q}}{\qbin{k}{1}{q}}. \] 
In particular, $p(V)$ depends only on $\dim(V)$. Substituting back into~\eqref{eq:gen_formula_expectation}, we obtain
\begin{align*} 
\mathbb{E}[T] &= \sum_{j=0}^{k-1} \frac{1}{ 1- \dfrac{\qbin{j}{1}{q}} {\qbin{k}{1}{q}} } \sum_{S\in\mathcal{S}}\Pr[S]\\ &= \sum_{j=0}^{k-1} \frac{q^k-1}{q^k-q^j}. \end{align*} 
Thus the lower bound is attained, completing the proof. 
\end{proof}

Recall that a generator matrix $G$ of the $k$-dimensional $q$-ary simplex code is obtained by organizing all nonzero elements of $\F_q^k$, up to nonzero scalar multiples, as columns of $G$. Equivalently, each projective point of $PG(k-1,q)$ appears exactly once. This fact is crucial to prove their optimality.

\begin{cor}\label{cor:simplex-unique-optimal-code} 
Let $n=(q^k-1)/(q-1)$, and let $G\in\mathbb{F}_q^{k\times n}$ have rank $k$. Then 
\[
\mathbb{E}[G] \geq \sum_{j=0}^{k-1} \frac{q^k-1}{q^k-q^j}. 
\] 
In particular, equality holds if and only if $G$ is a generator matrix of the $q$-ary simplex code.
Equivalently,
\[
\mathbb{E}_{\text{opt}}\left[\frac{q^k-1}{q-1},k\right]_q = \sum_{j=0}^{k-1} \frac{q^k-1}{q^k-q^j}.
\]
Consequently, every optimal $\left[\frac{q^k-1}{q-1},k\right]_q$ code is monomially equivalent to the $q$-ary simplex code. 
\end{cor}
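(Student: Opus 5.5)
The corollary is a translation of Theorem~\ref{thm:simplex_optimality} into coding language, so I will apply that theorem with $X=\F_q^k$ and $p=p_G$, the distribution induced by drawing columns of $G$. Since $G$ has rank $k$, its columns span $\F_q^k$, so $p_G(V)<1$ for every proper subspace $V$ and $p_G(\F_q^k)=1$; this is exactly the hypothesis on $p$ in Section~\ref{section:GeneralFormula}. Drawing columns uniformly with repetition is the same as drawing i.i.d.\ vectors from $p_G$, so $T_G$ has the law of $T$ and $\E[G]=\E[T]$. The lower bound then follows directly from~\eqref{eq:simplex_lower_bound}.

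For the equality case, Theorem~\ref{thm:simplex_optimality} says equality holds if and only if $p_G(0)=0$ and $p_G(L)=1/\qbin{k}{1}{q}$ for every $L\in\mathcal{P}_k$. In terms of columns:
\begin{itemize}
\item $p_G(0)=0$ means $z_G=0$, i.e.\ $G$ has no zero column.
\item $p_G(L)=m_G(L)/n$ and $n=\qbin{k}{1}{q}$, so the second condition reads $m_G(L)=1$ for every projective point $L$.
\end{itemize}
Thus equality holds exactly when every point of $PG(k-1,q)$ is represented by exactly one column of $G$ and no column is zero. The counts are consistent: $|\mathcal{P}_k|=n$, and the $n$ columns are distributed one per point.

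Such a $G$ is obtained from the standard simplex generator matrix by permuting columns and scaling each column by a nonzero scalar. This is a monomial transformation, so $\mathcal{C}_G$ is monomially equivalent to the simplex code. I will also check that $\E$ is invariant under monomial equivalence: column permutations do not change the drawing process, and nonzero column scalings do not change any spans. Conversely, any code monomially equivalent to the simplex code has a generator matrix of this form, so it attains equality.

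The value of $\E_{\mathrm{opt}}$ follows because the simplex code exists for these parameters and attains the bound, and the uniqueness statement follows from the characterization above. The only delicate point is wording: the corollary says ``a generator matrix of the simplex code'', which I will read as ``of a code monomially equivalent to it''. Making that reading explicit is the main thing to handle carefully; everything else is immediate.
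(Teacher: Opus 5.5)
Your proposal is correct and follows essentially the same route as the paper: specialize Theorem~\ref{thm:simplex_optimality} to $p=p_G$, translate the equality conditions into $z_G=0$ and $m_G(L)=1$ for every projective point $L$, and conclude that $G$ is a simplex matrix, unique up to column permutations and nonzero column scalings, hence monomially equivalent to the simplex code. Your explicit checks go slightly beyond the paper, which leaves them implicit: rank $k$ gives the support-spans-$X$ hypothesis, and the one-representative-per-point property does not depend on which generator matrix or monomially equivalent code is chosen.
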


\begin{rem}
    A proof showing that the simplex code is optimal in terms of the expectation was independently found by Zhihan Jin and Barnabas Janzer (private communication). In this paper, the optimality of the simplex code is obtained as a corollary of a more general statement (Theorem~\ref{thm:simplex_optimality}).
\end{rem}

\begin{proof}[Proof of Corollary~\ref{cor:simplex-unique-optimal-code}]
Recall that the probability distribution $p_G$ induced by the drawing procedure is given by 
\[ p_G(x) = \frac{|\{i\in[n]:g_i=x\}|}{n}. \] 
By Theorem~\ref{thm:simplex_optimality}, equality holds if and only if $p_G(0)=0$ and
\[
p_G(L)=\frac{1}{\qbin{k}{1}{q}} =\frac{1}{n}
\]
for every $1$-dimensional subspace $L$.
The latter conditions imply that $z_G=0$ and $m_G(L)=1$ for every projective point $L$. Therefore, $G$ contains exactly one nonzero representative of every one-dimensional subspace of $\mathbb{F}_q^k$, which is precisely the defining property of a $q$-ary simplex matrix. 

To conclude, any two simplex matrices differ by a permutation of columns and by multiplication of individual columns by nonzero scalars.
However, these operations produce monomially equivalent codes. Therefore, all optimal codes with these parameters are monomially equivalent to the $q$-ary simplex code. 
\end{proof}

\section{Optimality of First-Order $q$-ary Reed--Muller Codes}
\label{section:ReedMuller}

In this section, we prove that the first-order $q$-ary Reed--Muller code is optimal with respect to the coverage depth. The proof relies on the optimality of the $q$-ary simplex code established in Section~\ref{section:SimplexCode}. The main idea is that, after the first nonzero column is sampled, the remaining coverage process can be viewed in a quotient space of dimension one less.

\begin{thm}
\label{thm:reed_muller_optimality}
Let $m\geq 1$, let $n=q^m$, and let $R\in\F_q^{(m+1)\times n}$ be a generator matrix of the first-order $q$-ary Reed--Muller code. Then, for every matrix $A\in\F_q^{(m+1)\times n}$ of rank $m+1$,
\[
\E[A]\geq \E[R].
\]
In particular,
\[
\E_{\mathrm{opt}}[q^m,m+1]_q
=
m+1+\sum_{j=1}^{m}\frac{1}{q^j-1}.
\]
\end{thm}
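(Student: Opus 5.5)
The plan is to condition on the first nonzero column drawn and reduce to the simplex bound of Theorem~\ref{thm:simplex_optimality} in a quotient space of dimension $m$.

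Let $A\in\F_q^{(m+1)\times n}$ have rank $m+1$, and let $T_A$ be its coverage time. Write $T_A=T^{(1)}+T'$, where $T^{(1)}$ is the number of draws until the first nonzero column appears. If $z_A$ zero columns are present, then $\E[T^{(1)}]=n/(n-z_A)\ge 1$.

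Suppose the first nonzero column drawn is $g$. From then on, the process spans $\F_q^{m+1}$ exactly when the images of the drawn columns span the quotient $X_g=\F_q^{m+1}/\langle g\rangle$, which has dimension $m$. Draws are still uniform over the $n$ columns, so the images carry an induced distribution $p'_g$ on $X_g$. Its support spans $X_g$ because $A$ has full rank. Applying Theorem~\ref{thm:simplex_optimality} in $X_g$ gives, conditionally on $g$,
\[
\E[T'\mid g]\ \ge\ \sum_{j=0}^{m-1}\frac{q^m-1}{q^m-q^j}.
\]
The theorem is stated for an arbitrary distribution whose support spans the space, so it applies here. Averaging over $g$ and adding $\E[T^{(1)}]\ge 1$ yields
\[
\E[A]\ \ge\ 1+\sum_{j=0}^{m-1}\frac{q^m-1}{q^m-q^j}.
\]

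It remains to identify this bound with $\E[R]$. The term $j=0$ equals $1$. For $j\ge 1$, $(q^m-1)/(q^m-q^j)=1+(q^j-1)/(q^m-q^j)$. So the bound equals
\[
1+m+\sum_{j=1}^{m-1}\frac{q^j-1}{q^m-q^j}.
\]
The corollary on first-order Reed--Muller codes in Section~\ref{section:GeneralFormula} instead gives $\E[R]=m+1+\sum_{j=1}^m 1/(q^j-1)$. These two expressions do not obviously agree; for instance, at $m=1$ the first gives $2$ while $\E[R]=2+1/(q-1)$.

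This mismatch is where I expect the main obstacle. Bounding $\E[T'\mid g]$ by the full simplex bound loses the fact that $p'_g(0)>0$ necessarily. Every column lying in $\langle g\rangle$, including $g$ itself, maps to $0$ in $X_g$. So the equality case of Theorem~\ref{thm:simplex_optimality}, which requires zero mass at the origin, is never attained.

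To recover the missing term, I would rerun the argument of Theorem~\ref{thm:simplex_optimality} using the pair $(V_0,V_1)$, i.e.\ counting steps from $V_1=\langle g\rangle$ in the original space. By Theorem~\ref{thm:general_formula}, the contribution from $V_1$ is $\sum_S\Pr[S]/(1-p(V_1))$. For $j\ge 1$, I would bound the quantity $A_j$ from the proof of Lemma~\ref{lem:simplex_optimality} from below by a modified sequence $a_j$ satisfying the recursion
\[
a_{j+1}\ \ge\ \frac{1-a_j}{\qbin{m+1-j}{1}{q}}+a_j .
\]
The crucial new input is the initial value: I would use Cauchy--Schwarz on the $n$ columns to show $a_1=\E[p(V_1)]=\sum_L p(L)^2\ge 1/n=q^{-m}$. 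This holds because $p(L)$ is supported on at most $n$ points, which is stronger than the $1/\qbin{m+1}{1}{q}$ used in Lemma~\ref{lem:simplex_optimality}.

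Solving the recursion from $a_1=q^{-m}$ should give $a_j\ge q^{j-1-m}$. These are exactly the Reed--Muller values of $p(V_j)$ computed in Section~\ref{section:GeneralFormula}. Jensen's inequality, applied as in \eqref{eq:optimization_jensen}, then yields $\E[A]\ge 1+\sum_{j=1}^m (1-q^{j-1-m})^{-1}=\E[R]$.

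The step I would check most carefully is the recursion: that starting from $a_1=q^{-m}$ it indeed propagates to $a_j\ge q^{j-1-m}$. This reduces to the identity
\[
q^{j-1-m}+\frac{1-q^{j-1-m}}{\qbin{m+1-j}{1}{q}}\ \ge\ q^{j-m},
\]
and I expect it to require only a direct calculation.
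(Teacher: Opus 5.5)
Your final argument (the second half of the proposal) is correct, but it is not the route the paper takes. The paper keeps the quotient idea you set aside and repairs the loss you diagnosed. Conditioned on the first nonzero column $a_j$, it looks only at draws whose image in $X_j=\F_q^{m+1}/\langle a_j\rangle$ is nonzero. That conditioned distribution has no mass at $0$ and spans $X_j$, so Theorem~\ref{thm:simplex_optimality} applies as a black box and requires on average at least $\E[S_m]=\sum_{\ell=0}^{m-1}\frac{q^m-1}{q^m-q^\ell}$ such draws. Each draw is of this kind with probability $(n-d_j)/n$, where $d_j=|\{i:a_i\in\langle a_j\rangle\}|\ge z_A+1\ge 1$. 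Hence the additional cost is at least $\frac{n}{n-d_j}\E[S_m]\ge\frac{n}{n-1}\E[S_m]$, and the factor $\frac{n}{n-1}$ is exactly what your first bound lost.

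You instead stay in $\F_q^{m+1}$ and rerun the induction of Lemma~\ref{lem:simplex_optimality} from the sharper base case $A_1\ge 1/n$. This uses the fact that at most $n=q^m<\qbin{m+1}{1}{q}$ projective points carry mass. The propagation step you flagged holds with equality, since $\bigl(1-q^{j-1-m}\bigr)/\qbin{m+1-j}{1}{q}=(q-1)q^{j-1-m}$, and it carries the inequality forward because $x\mapsto x+(1-x)/N$ is increasing.

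For arbitrary $n$, the recursion started at $1/n$ solves to $1-a_j=\frac{n-1}{n}\bigl(1-\qbin{j-1}{1}{q}/\qbin{m}{1}{q}\bigr)$. So your bound is precisely $1+\frac{n}{n-1}\E[S_m]$, which is the paper's intermediate bound; the two routes have the same strength. The paper's version is more modular and shows why Reed--Muller attains the bound: after one column, the code looks like a simplex code in the quotient plus one collapsed column. Yours avoids the conditioning and the geometric-waiting-time rescaling, and it shows that the Reed--Muller profile $p(V_j)=q^{j-1-m}$ is the extremal trajectory of the recursion.

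One detail to patch: $A_1=\sum_L p(L)^2$ holds only when $A$ has no zero columns. In general $A_1=\sum_L p(L)\frac{p(L)-p(0)}{1-p(0)}$. Cauchy--Schwarz over the at most $n-z_A$ directions carrying nonzero columns still bounds this below by $(1+z_A)/n\ge 1/n$, and the $j=0$ term is $1/(1-p(0))\ge 1$.
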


\begin{proof}
Recall that $z_A=\left|\{i\in[n]:a_i=0\}\right|$.
Let $\tau$ be the first time that a nonzero column is sampled. Since a sampled column is nonzero with probability $(n-z_A)/n$, we have
\[
\E[\tau]=\frac{n}{n-z_A}.
\]

Suppose that the first nonzero column sampled is $a_j$. Consider the quotient space
\[
X_j=\F_q^{m+1}/\langle a_j\rangle,
\]
which has dimension $m$, and let
\[
\pi_j:\F_q^{m+1}\longrightarrow X_j
\]
be the natural projection. Once $a_j$ has been sampled, spanning $\F_q^{m+1}$ is equivalent to having the projections of the subsequently sampled columns span $X_j$.

Let
\[
d_j=\left|\{i\in[n]:a_i\in\langle a_j\rangle\}\right|.
\]
A projected column is zero precisely when the sampled column belongs to $\langle a_j\rangle$. Hence, a nonzero projected column is obtained with probability
\[
\frac{n-d_j}{n}.
\]
Moreover, since all zero columns belong to $\langle a_j\rangle$ and $a_j\neq0$, we have
\[
d_j\geq z_A+1.
\]

Conditioned on obtaining a nonzero projection, the projected columns induce a probability distribution on $X_j\setminus\{0\}$. Since $A$ has rank $m+1$, the support of this distribution spans $X_j$. Therefore, by Theorem~\ref{thm:simplex_optimality}, the expected number of nonzero projected columns needed to span $X_j$ is at least the coverage depth of the $q$-ary simplex code of dimension $m$. Denote a generator matrix of this simplex code by $S_m$.

Since a nonzero projected column is obtained with probability $(n-d_j)/n$, it follows that, conditioned on $a_j$ being the first nonzero sampled column, the expected number of additional draws is at least
\[
\frac{n}{n-d_j}\E[S_m].
\]
Combining this with $d_j\geq z_A+1$ yields
\[
\E[A]
\geq
\frac{n}{n-z_A}
+
\frac{n}{n-z_A-1}\E[S_m].
\]
Consequently,
\[
\E[A]
\geq
1+\frac{n}{n-1}\E[S_m].
\]

It remains to show that equality is attained by the first-order Reed--Muller code. A natural generator matrix for this code has columns
\[
r_v=(1,v),\qquad v\in\F_q^m.
\]
In particular, it has no zero columns. Fix a column $r_{v_0}=(1,v_0)$. Since all columns have first coordinate equal to $1$, no other column lies in $\langle r_{v_0}\rangle$, and hence $d_j=1$.

Furthermore, modulo $\langle r_{v_0}\rangle$, the column $r_v=(1,v)$ can be represented by
\[
r_v-r_{v_0}=(0,v-v_0).
\]
As $v$ ranges over $\F_q^m\setminus\{v_0\}$, the vectors $v-v_0$ range over all nonzero elements of $\F_q^m$ exactly once. Thus, conditioned on obtaining a nonzero projection, the induced distribution on the one-dimensional subspaces of $X_j$ is uniform. By the equality condition in Theorem~\ref{thm:simplex_optimality}, the corresponding projected coverage process attains $\E[S_m]$.

Therefore,
\[
\E[R]
=
1+\frac{n}{n-1}\E[S_m].
\]
Since $n=q^m$ and
\[
\E[S_m]
=
\sum_{\ell=0}^{m-1}
\frac{q^m-1}{q^m-q^\ell},
\]
we obtain
\[
\begin{split}
\E[R]
&=
1+\sum_{\ell=0}^{m-1}
\frac{q^m}{q^m-q^\ell}\\
&=
1+\sum_{j=1}^{m}\frac{q^j}{q^j-1} =
m+1+\sum_{j=1}^{m}\frac{1}{q^j-1}.
\end{split}
\]
Hence $\E[A]\geq\E[R]$, completing the proof.
\end{proof}

\section{Optimality of the $q$-ary Hamming Code}\label{section:HammingCode}

In this section, we prove that the \(q\)-ary Hamming code is optimal, i.e., it minimizes the expected coverage time among all linear codes with the same length and dimension. We begin with a formula that expresses the expected coverage time in terms of spanning subsets of the columns of a generator matrix. By passing to the dual code, these quantities become counts of linearly independent column subsets. We then represent these counts using the basis generating polynomial of a projective-geometry matroid and show, by averaging over the action of \(\mathrm{GL}(r,q)\), that this polynomial is maximized by the simplex matrix.

\begin{thm}\label{th:Hamm}
Let \(r\geq 2\), and let \(H\in\mathbb{F}_q^{k\times n}\) be a generator matrix of the \(q\)-ary Hamming code $\mathcal{C} \subseteq \F_q^n$, where \(n=(q^r-1)/(q-1)\) and \(k=n-r\). For every matrix \(A\in\mathbb{F}_q^{k\times n}\) of rank \(k\), we have
\(
\mathbb{E}[A]\geq \mathbb{E}[H].
\) 
In particular, 
\[
\E_{\text{opt}}\left[\frac{q^r-1}{q-1}, \frac{q^r-1}{q-1}-r \right]_q = \E [H].
\]
\end{thm}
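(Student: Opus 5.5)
Following the roadmap at the start of this section, I would use the tail-sum formula $\E[T]=\sum_{s\ge 0}\Pr[T>s]$. Since draws are uniform with repetition, conditioning on the set $S$ of distinct indices seen after $s$ draws gives
\[
\E[A]=\sum_{i=0}^{n-1} c_i\,\frac{N_i(A)}{\binom{n}{i}},
\]
where $N_i(A)$ is the number of $i$-subsets $S\subseteq[n]$ with $\rank(A_S)<k$. The weights $c_i>0$ depend only on $n$ and $i$; they come from the coupon-collector computation and in fact equal $n/(n-i)$. Writing $F_i(A)=\binom{n}{i}-N_i(A)$ for the number of spanning $i$-subsets, minimizing $\E[A]$ amounts to maximizing a positive combination of the $F_i(A)$. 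It therefore suffices to show $F_i(A)\le F_i(H)$ for every $i$.

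Next I would dualize. A set $S$ spans the column space $\F_q^k$ of $A$ if and only if the complement $[n]\setminus S$ is an independent set in the matroid of $A^\perp\in\F_q^{r\times n}$. This is standard matroid duality: $S$ contains a basis of $M(A)$ if and only if its complement is independent in $M(A)^*=M(A^\perp)$. Hence $F_i(A)=I_{n-i}(A^\perp)$, the number of linearly independent $(n-i)$-subsets of columns of $A^\perp$. For the Hamming code, $H^\perp$ is the simplex matrix $S_r$. Only sizes $n-i\le r$ matter, because larger sets are never independent. So the claim reduces to the following statement: among all rank-$r$ matrices $B\in\F_q^{r\times n}$ with $n=\qbin{r}{1}{q}$, the simplex matrix maximizes the number of independent $\ell$-subsets of columns, for every $1\le\ell\le r$.

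To prove this, I would encode $B$ by its multiplicity function $m:\mathcal{P}_r\to\mathbb{Z}_{\ge0}$ on projective points, with total mass $n-z_B$. Zero columns only hurt, so I assume $z_B=0$ and $\sum_P m(P)=n$. Then
\[
I_\ell(B)=\sum_{\{P_1,\dots,P_\ell\}\text{ independent}} m(P_1)\cdots m(P_\ell)=f_\ell(m),
\]
which is the degree-$\ell$ part of the independent-set (truncated basis) generating polynomial of the projective-geometry matroid $PG(r-1,q)$, evaluated at $m$. This polynomial is invariant under $\mathrm{GL}(r,q)$ acting on $\mathcal{P}_r$. Its restriction to the truncation of rank $\ell$ is the basis generating polynomial of a matroid, hence Lorentzian (completely log-concave). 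In particular $f_\ell^{1/\ell}$ is concave on the nonnegative orthant. Averaging $m$ over the group orbit, which is transitive on points, produces the constant vector $\bar m\equiv 1$. Concavity together with invariance then gives $f_\ell(m)^{1/\ell}\le f_\ell(\bar m)^{1/\ell}$, and $\bar m\equiv 1$ is exactly the simplex matrix. This yields $I_\ell(B)\le I_\ell(S_r)$ for all $\ell$, hence $F_i(A)\le F_i(H)$ for all $i$, hence $\E[A]\ge\E[H]$.

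I expect the concavity step to be the main obstacle. One has to invoke correctly that the rank-$\ell$ truncation of $PG(r-1,q)$ is a matroid, so that its basis generating polynomial is Lorentzian (Brändén–Huh, or Anari et al.). One also has to justify that $\ell$-th roots of Lorentzian polynomials are concave on the positive orthant and extend to its closure, since multiplicity vectors may have zeros. If that route is awkward, I would fall back on the weaker statement that a Lorentzian polynomial is log-concave on the positive orthant. Log-concavity is also sufficient for the averaging argument via Jensen, because $\log f_\ell(\bar m)\ge \frac{1}{|G|}\sum_g \log f_\ell(g\cdot m)=\log f_\ell(m)$.
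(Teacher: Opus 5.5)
Your proposal is correct and follows essentially the same route as the paper. Both arguments use the spanning-set expansion of $\E[A]$ (which the paper cites from Gruica et al. rather than deriving), dualize to independent column sets of the parity-check matrix, identify these counts with the basis generating polynomial of the rank-$\ell$ truncation of the projective-geometry matroid, and combine $\mathrm{GL}(r,q)$-averaging with complete log-concavity. The remaining differences are cosmetic: you remove zero columns by monotonicity and use concavity of $f_\ell^{1/\ell}$ on the closed orthant, whereas the paper uses homogeneity (writing $\overline{m}_B=c\mathbf{1}$ with $c\le 1$) and an $\varepsilon$-perturbation to apply log-concavity on the open orthant.
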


Recall, for \(k\leq s\leq n\), the quantities
\[
\alpha(G,s)=\left|\left\{S\subseteq[n]: |S|=s,\ \operatorname{rank}(G_S)=k\right\}\right|,
\]
which counts the number of $s$-subsets of columns of $G$ that span the space $\F_q^k$. 

Gruica et al.~\cite{gruica2025combinatorial} showed that it is possible to express the expectation in terms of the latter quantities as follows:
\[
\mathbb{E}[G]=nH_n-\sum_{s=k}^{n-1}\frac{\alpha(G,s)}{\binom{n-1}{s}}.
\]
Since all coefficients in the sum are positive, minimizing \(\mathbb{E}[G]\) amounts to maximizing the quantities \(\alpha(G,s)\).
The spanning condition in the definition of \(\alpha(G,s)\) has a convenient dual interpretation. To state it, we introduce the following notation.

\begin{defn}
Let \(G\in\mathbb{F}_q^{k\times n}\). For \(0\leq t\leq n\), define
\[
I(G,t)=\left|\left\{T\subseteq[n]: |T|=t,\ \operatorname{rank}(G_T)=t\right\}\right|.
\]
Thus, \(I(G,t)\) is the number of linearly independent subsets of \(t\) columns of \(G\).
\end{defn}
We will use the following standard relation between spanning sets of a code and independent sets of its dual. We include the short proof for completeness.
\begin{lem}
For every \(k\leq s\leq n\),
\(
\alpha(G,s)=I(G^\perp,n-s).
\)
\end{lem}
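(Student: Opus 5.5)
We will prove the identity by exhibiting a bijection $S\mapsto [n]\setminus S$ between the $s$-subsets counted by $\alpha(G,s)$ and the $(n-s)$-subsets counted by $I(G^\perp,n-s)$. It therefore suffices to show that, for every $S\subseteq[n]$ with complement $T=[n]\setminus S$,
\[
\operatorname{rank}(G_S)=k \iff \operatorname{rank}(G^\perp_T)=|T|.
\]
Since $G$ has rank $k$, the code $\mathcal{C}_G^\perp$ has dimension $n-k$, so $G^\perp\in\mathbb{F}_q^{(n-k)\times n}$. We will use $\ker(G^\perp)=\mathcal{C}_G$ and, symmetrically, $\ker(G)=\mathcal{C}_G^\perp$, where kernels are taken in $\mathbb{F}_q^n$ acting on column vectors.

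The key step is the following reformulation of each side. The columns of $G^\perp_T$ are linearly dependent if and only if there is a nonzero $x\in\mathbb{F}_q^n$ supported in $T$ with $G^\perp x=0$, that is, a nonzero codeword $x\in\mathcal{C}_G$ vanishing on $S$. On the other side, $\operatorname{rank}(G_S)<k$ holds if and only if the rows of $G_S$ are dependent, that is, there is a nonzero $y\in\mathbb{F}_q^k$ with $y^\top G_S=0$. Since $G$ has full row rank, the map $y\mapsto y^\top G$ is injective onto $\mathcal{C}_G$. Hence such a $y$ exists exactly when there is a nonzero codeword $c=y^\top G\in\mathcal{C}_G$ with $c_i=0$ for all $i\in S$.

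The two conditions are therefore literally the same: both say that some nonzero codeword of $\mathcal{C}_G$ has support contained in $T$. Taking negations gives the displayed equivalence, and counting subsets yields $\alpha(G,s)=I(G^\perp,n-s)$.

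The only point needing care is the injectivity of $y\mapsto y^\top G$, which requires $\operatorname{rank}(G)=k$. This holds under the paper's standing assumption. The argument is also valid for every $k\le s\le n$, including $s=n$, where $T=\emptyset$ and both counts equal $1$, and the case $n-s>n-k$ cannot occur. No step is a genuine obstacle; the main thing is to keep the identification $\ker(G^\perp)=\mathcal{C}_G$ in the right orientation.
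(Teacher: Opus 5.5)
Your proposal is correct and follows essentially the same route as the paper: the complement bijection $S\mapsto[n]\setminus S$, together with the observation that both $\operatorname{rank}(G_S)<k$ and the dependence of the columns of $G^\perp_T$ amount to the existence of a nonzero codeword of $\mathcal{C}_G$ supported in $T$. Your explicit use of the injectivity of $y\mapsto y^\top G$ (from $\operatorname{rank}(G)=k$) spells out the step that the paper states without justification.
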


\begin{proof}
Fix \(S\subseteq[n]\), and let \(T=[n]\setminus S\). The matrix \(G_S\) has rank \(k\) if and only if there is no nonzero codeword of \(\mathcal{C}_G\) whose support is contained in \(T\). On the other hand, the columns of \(G^\perp\) indexed by \(T\) are linearly dependent if and only if there exists a nonzero vector supported on \(T\) that belongs to
\[
\ker(G^\perp)=\mathcal{C}_G.
\]
Therefore, \(G_S\) has rank \(k\) if and only if the columns of \((G^\perp)_T\) are linearly independent. The correspondence \(S\mapsto[n]\setminus S\) is a bijection between subsets of size \(s\) and subsets of size \(n-s\), which proves the claim.
\end{proof}

Substituting \(t=n-s\) into the formula for the expected coverage time gives
\[
\mathbb{E}[G]=nH_n-\sum_{t=1}^{n-k}\frac{I(G^\perp,t)}{\binom{n-1}{t-1}}.
\]
Hence, minimizing the expected coverage time reduces to maximizing \(I(G^\perp,t)\) for each \(1\leq t\leq r\). To analyze these quantities, we index the nonzero column directions by the one-dimensional subspaces of \(\mathbb{F}_q^r\), record their multiplicities in a vector, and describe their linear-dependence structure using a matroid.

The linear-dependence structure of the nonzero column directions is described by a matroid on the projective space \(\mathcal{P}_r\). We recall the matroid terminology needed below.

\begin{defn}
A matroid is a pair \(M=(E,\mathcal{I})\), where \(E\) is a finite set and \(\mathcal{I}\) is a nonempty collection of subsets of \(E\) satisfying the following properties:
\begin{enumerate}
    \item If \(T\in\mathcal{I}\) and \(S\subseteq T\), then \(S\in\mathcal{I}\).
    \item If \(S,T\in\mathcal{I}\) and \(|S|<|T|\), then there exists \(i\in T\setminus S\) such that \(S\cup\{i\}\in\mathcal{I}\).
\end{enumerate}
The sets in \(\mathcal{I}\) are called independent sets. A basis is a maximal independent set, and the rank of \(M\) is the cardinality of any basis.
\end{defn}

For \(1\leq t\leq r\), let \(M_{r,t}\) be the matroid on \(\mathcal{P}_r\) whose independent sets are the linearly independent subsets of size at most \(t\). Its bases are precisely the linearly independent subsets of \(\mathcal{P}_r\) of size \(t\).

\begin{defn}
Let \(M=(E,\mathcal{I})\) be a matroid, and let \(\mathcal{B}_M\) denote its collection of bases. The basis generating polynomial of \(M\) is
\[
g_M(\mathbf{z})=\sum_{B\in\mathcal{B}_M}\prod_{e\in B}z_e.
\]
\end{defn}

For the matroid \(M_{r,t}\), we write
\[
g_{r,t}(\mathbf{z})=\sum_{\substack{P\subseteq\mathcal{P}_r\\ |P|=t,\ P\text{ is linearly independent}}}\prod_{p\in P}z_p.
\]
The polynomial \(g_{r,t}\) is homogeneous of degree \(t\). Moreover, by Theorem~4.2 of Anari et al.~\cite{anari2018log}, the basis generating polynomial of every matroid is completely log-concave over the positive orthant. In particular, \(\log g_{r,t}\) is concave wherever \(g_{r,t}\) is positive.

We now encode the column multiplicities of a matrix as a vector indexed by \(\mathcal{P}_r\). 
Recall that, for a matrix \(G\in\mathbb{F}_q^{r\times n}\), and for each \(p\in\mathcal{P}_r\), \(m_G(p)\) is the number of nonzero columns of \(G\) whose span is \(p\).
The basis generating polynomial evaluated at \(m_G\) counts the independent subsets of columns of \(G\).

\begin{lem}
For every \(1\leq t\leq r\),
\[
I(G,t)=g_{r,t}(m_G).
\]
\end{lem}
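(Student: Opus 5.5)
We will prove the identity by a direct double-counting argument: we group the $t$-subsets of column indices according to the projective points spanned by their columns.

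First, let $T\subseteq[n]$ with $|T|=t$ be such that $\rank(G_T)=t$. Then no column indexed by $T$ is zero, since a zero column would make the set dependent. Also, no two columns indexed by $T$ span the same point of $\mathcal{P}_r$, since two nonzero columns on the same line are scalar multiples of each other and hence dependent. So the map $\Phi(T)=\{\langle g_i\rangle : i\in T\}$ sends $T$ to a $t$-element subset $P\subseteq\mathcal{P}_r$. Moreover, $P$ is linearly independent in the sense of the matroid $M_{r,t}$: choosing any nonzero representative of each point of $P$ gives a linearly independent family, because the columns $g_i$, $i\in T$, are such representatives and linear independence is unaffected by rescaling vectors by nonzero scalars. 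Hence $P$ is a basis of $M_{r,t}$.

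Conversely, we count the fibre of $\Phi$ over a fixed basis $P$ of $M_{r,t}$. A $t$-subset $T$ with $\Phi(T)=P$ is obtained by choosing, for each $p\in P$, exactly one column index $i$ with $g_i\neq 0$ and $\langle g_i\rangle=p$. There are $m_G(p)$ such indices for each $p$, and the choices for different $p$ are independent. Every $T$ obtained this way has $|T|=t$ and consists of nonzero representatives of the independent points of $P$. By the rescaling remark above, these columns are linearly independent, so $\rank(G_T)=t$. The fibre therefore has size $\prod_{p\in P}m_G(p)$; this is zero if some point of $P$ does not occur as a column, which is consistent with the count.

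Summing over all bases $P$ of $M_{r,t}$ gives
\[
I(G,t)=\sum_{P\in\mathcal{B}_{M_{r,t}}}\prod_{p\in P}m_G(p)=g_{r,t}(m_G),
\]
which is the claim. The only point that needs care is ensuring that zero columns and repeated directions are excluded exactly. This is handled by the two observations in the first step: zero columns can never belong to an independent set, and two columns with the same direction can never both belong to one.
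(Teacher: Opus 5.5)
Your proof is correct and takes the same approach as the paper: group the independent $t$-subsets of columns by the set $P$ of projective points they span, then count each fibre as $\prod_{p\in P} m_G(p)$. You are somewhat more explicit than the paper about why zero columns and repeated directions are excluded, and why replacing columns by other nonzero representatives of the same points preserves independence.
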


\begin{proof}
Every linearly independent subset of \(t\) columns of \(G\) consists of nonzero columns with distinct and linearly independent spans. It therefore determines a linearly independent set \(P\subseteq\mathcal{P}_r\) of size \(t\).

Conversely, fix a linearly independent set \(P\subseteq\mathcal{P}_r\) of size \(t\). For each \(p\in P\), one may choose any of the \(m_G(p)\) columns of \(G\) whose span is \(p\). Hence, the number of independent column subsets associated with \(P\) is \(\prod_{p\in P}m_G(p)\). Summing over all linearly independent sets \(P\) of size \(t\) gives
\[
I(G,t)=\sum_{\substack{P\subseteq\mathcal{P}_r\\ |P|=t,\ P\text{ is linearly independent}}}\prod_{p\in P}m_G(p)=g_{r,t}(m_G).
\]
\end{proof}

We have now reduced the theorem to an inequality for \(g_{r,t}\). For a parity-check matrix \(B\), the value \(g_{r,t}(m_B)\) counts the independent subsets of \(t\) columns. For a simplex matrix \(S\), every projective point appears exactly once, so \(m_S=\mathbf{1}\). It remains to show that \(g_{r,t}\) is maximized at this uniform multiplicity vector.

\begin{proof}[Proof Theorem~\ref{th:Hamm}]
Let \(A\in\mathbb{F}_q^{k\times n}\) have rank \(k\), and let \(B\in\mathbb{F}_q^{r\times n}\) be a parity-check matrix of \(\mathcal{C}_A\). Let \(S\) be a \(q\)-ary simplex matrix of dimension \(r\). The columns of \(S\) contain exactly one nonzero representative of each element of \(\mathcal{P}_r\). Therefore, \(m_S=\mathbf{1}\), and it is sufficient to prove that
\[
g_{r,t}(m_B)\leq g_{r,t}(\mathbf{1})
\]
for every \(1\leq t\leq r\).

The group \(\mathrm{GL}(r,q)\) acts transitively on \(\mathcal{P}_r\). For \(\sigma\in\mathrm{GL}(r,q)\) and \(x\in\mathbb{R}^{\mathcal{P}_r}\), define
\[
(\sigma x)(p)=x(\sigma^{-1}p).
\]
Since every \(\sigma\in\mathrm{GL}(r,q)\) preserves linear independence, it permutes the bases of \(M_{r,t}\). Consequently,
\[
g_{r,t}(\sigma x)=g_{r,t}(x).
\]

Define the orbit average
\[
\overline{m}_B=\frac{1}{|\mathrm{GL}(r,q)|}\sum_{\sigma\in\mathrm{GL}(r,q)}\sigma m_B.
\]
If \(g_{r,t}(m_B)=0\), then the desired inequality is immediate. Assume therefore that \(g_{r,t}(m_B)>0\).

For every \(\varepsilon>0\), the vector \(m_B+\varepsilon\mathbf{1}\) lies in the positive orthant. Since \(\mathbf{1}\) is fixed by the action of \(\mathrm{GL}(r,q)\), the invariance of \(g_{r,t}\) and Jensen's inequality imply
\[
\log g_{r,t}(m_B+\varepsilon\mathbf{1})\leq \log g_{r,t}(\overline{m}_B+\varepsilon\mathbf{1}).
\]
The logarithm is monotonically increasing, and therefore
\[
g_{r,t}(m_B+\varepsilon\mathbf{1})\leq g_{r,t}(\overline{m}_B+\varepsilon\mathbf{1}).
\]
Letting \(\varepsilon\rightarrow0\) and using the continuity of \(g_{r,t}\), we obtain
\[
g_{r,t}(m_B)\leq g_{r,t}(\overline{m}_B).
\]

The transitivity of the action of \(\mathrm{GL}(r,q)\) on \(\mathcal{P}_r\) implies that all coordinates of \(\overline{m}_B\) are equal. Hence,
\[
\overline{m}_B=c\mathbf{1}
\]
for some \(c\geq0\).

The action of \(\mathrm{GL}(r,q)\) only permutes the coordinates of \(m_B\), so
\[
\sum_{p\in\mathcal{P}_r}\overline{m}_B(p)=\sum_{p\in\mathcal{P}_r}m_B(p)=\left|\left\{i\in[n]: b_i\neq0\right\}\right|.
\]
Since \(|\mathcal{P}_r|=n\), it follows that
\[
c=\frac{\left|\left\{i\in[n]: b_i\neq0\right\}\right|}{n}\leq1.
\]
In particular, \(c=1\) if \(B\) has no zero columns, while \(c<1\) otherwise.

Since \(g_{r,t}\) is homogeneous of degree \(t\),
\[
g_{r,t}(\overline{m}_B)=g_{r,t}(c\mathbf{1})=c^t g_{r,t}(\mathbf{1})\leq g_{r,t}(\mathbf{1})=g_{r,t}(m_S).
\]
Combining the two inequalities gives
\[
I(B,t)=g_{r,t}(m_B)\leq g_{r,t}(m_S)=I(S,t)
\]
for every \(1\leq t\leq r\).

Finally, the dual expression for the expected coverage time yields
\[
\mathbb{E}[A]=nH_n-\sum_{t=1}^{r}\frac{I(B,t)}{\binom{n-1}{t-1}}
\geq nH_n-\sum_{t=1}^{r}\frac{I(S,t)}{\binom{n-1}{t-1}}
=\mathbb{E}[H],
\]
which is the desired result.
\end{proof}

\section{Uniformly Random Codes}\label{section:Random Codes}

We now study the expected coverage depth of a random linear code. Random matrices were previously used by Wang and Yaakobi~\cite{wang2026randomaccess} as a probabilistic construction to derive upper bounds on the minimum achievable coverage depth. Here, rather than using randomness only as an existence argument, our objective is to characterize the random-code ensemble itself. We sample uniformly from the set of all full-rank $k\times n$ matrices and determine its average coverage depth exactly.

Let
\[
\mathcal{M}_{k,n}(q)
=
\left\{
A\in\mathbb{F}_q^{k\times n}:
\rank(A)=k
\right\},
\]
and let \(A\) be sampled uniformly from \(\mathcal{M}_{k,n}(q)\).

Conditioned on \(A\), let \(T_A\) be its coverage time as defined in the preliminaries. Unless stated otherwise, all expectations in this section are taken jointly over the random choice of \(A\) and the column-sampling process.
For brevity, we write \(T=T_A\).
We first derive an exact expression for this expectation.

\begin{lem}
\label{lem:random-exact}
Let \(A\) be sampled uniformly from \(\mathcal{M}_{k,n}(q)\). Then
\begin{equation}
\label{eq:random-exact}
\begin{aligned}
\mathbb{E}[T]
&=
n\left(H_n-H_{n-k}\right) \\
&\quad+
n\sum_{t=k}^{n-1}
\frac{1}{n-t}
\left(
1-
q^{k(n-t)}
\prod_{i=0}^{k-1}
\frac{q^t-q^i}{q^n-q^i}
\right).
\end{aligned}
\end{equation}

Equivalently,
\begin{equation}
\label{eq:random-exact-product}
\mathbb{E}[T]
=
n\left(H_n-H_{n-k}\right)
+
n\sum_{t=k}^{n-1}
\frac{1}{n-t}
\left(
1-
\prod_{i=0}^{k-1}
\left(
1-
\frac{q^{i-t}-q^{i-n}}
{1-q^{i-n}}
\right)
\right).
\end{equation}
\end{lem}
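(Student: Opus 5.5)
We will use the spanning-set formula of Gruica et al.~\cite{gruica2025combinatorial} recalled in Section~\ref{section:HammingCode},
\[
\mathbb{E}[T_A]=nH_n-\sum_{s=k}^{n-1}\frac{\alpha(A,s)}{\binom{n-1}{s}},
\]
and average it over the uniform choice of \(A\in\mathcal{M}_{k,n}(q)\). By linearity of expectation,
\[
\mathbb{E}[\alpha(A,s)]=\binom{n}{s}\Pr[\rank(A_S)=k],
\]
where \(S\) is any fixed \(s\)-subset. By symmetry under column permutations, this probability does not depend on \(S\). Since \(\binom{n}{s}/\binom{n-1}{s}=n/(n-s)\), we get
\[
\mathbb{E}[T]=nH_n-n\sum_{s=k}^{n-1}\frac{\pi_s}{n-s},\qquad \pi_s=\Pr[\rank(A_S)=k].
\]

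Next we split \(\pi_s=1-(1-\pi_s)\). The contribution of the constant \(1\) is
\[
n\sum_{s=k}^{n-1}\frac{1}{n-s}=n\sum_{u=1}^{n-k}\frac{1}{u}=nH_{n-k}.
\]
Hence
\[
nH_n-nH_{n-k}+n\sum_{s=k}^{n-1}\frac{1-\pi_s}{n-s},
\]
which is exactly the shape of~\eqref{eq:random-exact} with \(t=s\). It remains to show that
\[
\pi_t=q^{k(n-t)}\prod_{i=0}^{k-1}\frac{q^t-q^i}{q^n-q^i}.
\]

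To compute \(\pi_t\), count the full-rank \(k\times n\) matrices whose first \(t\) columns already have rank \(k\). There are \(\prod_{i=0}^{k-1}(q^t-q^i)\) full-rank choices for the \(k\times t\) block \(A_S\), obtained by counting its linearly independent rows. Once \(A_S\) has rank \(k\), the whole matrix has rank \(k\) automatically, so the remaining \(n-t\) columns are arbitrary, giving \(q^{k(n-t)}\) choices. The total is \(|\mathcal{M}_{k,n}(q)|=\prod_{i=0}^{k-1}(q^n-q^i)\). Taking the ratio gives the claimed \(\pi_t\).

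For the second form~\eqref{eq:random-exact-product}, distribute the factor \(q^{k(n-t)}\) as \(q^{n-t}\) into each of the \(k\) factors:
\[
\frac{q^{n-t}(q^t-q^i)}{q^n-q^i}=\frac{1-q^{i-t}}{1-q^{i-n}}=1-\frac{q^{i-t}-q^{i-n}}{1-q^{i-n}}.
\]

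The main care point is the justification of the Gruica et al.\ formula, which we take as given from the earlier discussion; the probabilistic step interchanging the average over \(A\) with the sum over \(s\) is harmless since all sums are finite. The only real computation is the counting of \(\pi_t\), which is routine.
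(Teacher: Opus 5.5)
Your proof is correct, and it is essentially the paper's argument with one step imported rather than rederived. The paper proves the tail identity $\mathbb{E}[T]=n\sum_{t=0}^{n-1}\Pr(D>t)/(n-t)$ directly for the joint ensemble, where $D$ is the number of distinct indices sampled when rank $k$ is first reached, using the partial coupon-collector formula and column-permutation invariance. You instead take the Gruica et al.\ formula, which is exactly the fixed-matrix form of that identity (since $\Pr(D>t)=1-\alpha(G,t)/\binom{n}{t}$ and $\binom{n}{t}(n-t)/n=\binom{n-1}{t}$), and average it over $A$ by linearity. From there, your splitting off of $nH_{n-k}$ and your count $q^{k(n-t)}\prod_{i=0}^{k-1}(q^t-q^i)$ coincide with the paper's; your route is shorter given the cited formula, while the paper's is self-contained.
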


\begin{proof}
Let \(D\) denote the number of distinct column indices sampled by the
time the sampled columns first attain rank \(k\). Conditioned on \(D=d\), the number of samples required is the
waiting time to observe \(d\) distinct elements in the classical
coupon-collector process. Hence, by the partial coupon-collector
formula \cite[p.~216]{flajolet1992birthday},
\[
\mathbb{E}[T\mid D=d]
=
n\left(H_n-H_{n-d}\right).
\]
By the law of total expectation,
\[
\mathbb{E}[T]
=
\sum_{d=0}^{n}
n\left(H_n-H_{n-d}\right)\Pr(D=d).
\]
Since
\[
H_n-H_{n-d}
=
\sum_{j=n-d+1}^{n}\frac{1}{j},
\]
we obtain
\[
\mathbb{E}[T]
=
\sum_{d=0}^{n}
n
\left(
\sum_{j=n-d+1}^{n}\frac{1}{j}
\right)
\Pr(D=d).
\]

Interchanging the order of summation gives
\[
\mathbb{E}[T]
=
n\sum_{j=1}^{n}
\frac{1}{j}
\Pr(D\geq n-j+1).
\]
Setting \(t=n-j\), we obtain
\[
\mathbb{E}[T]
=
n\sum_{t=0}^{n-1}
\frac{1}{n-t}\Pr(D>t).
\]

Since at least \(k\) distinct columns are required to attain rank \(k\),
\[
\Pr(D>t)=1,
\qquad
0\leq t\leq k-1.
\]
Consequently,
\begin{equation}
\label{eq:random-tail}
\mathbb{E}[T]
=
n\left(H_n-H_{n-k}\right)
+
n\sum_{t=k}^{n-1}
\frac{1}{n-t}\Pr(D>t).
\end{equation}

It remains to compute \(\Pr(D>t)\). The first \(t\) distinct sampled
indices form a uniformly random ordered \(t\)-tuple of distinct
indices. Moreover, the distribution of \(A\) is invariant under column
permutations. Therefore, if \(A_{[t]}\) denotes the submatrix formed by
the first \(t\) columns of \(A\), then
\[
\Pr(D>t)
=
\Pr\left(\rank(A_{[t]})<k\right).
\]
Hence,
\[
\Pr(D>t)
=
1-\Pr\left(\rank(A_{[t]})=k\right).
\]

The number of full-rank matrices in
\(\mathbb{F}_q^{k\times n}\) is
\[
\left|\mathcal{M}_{k,n}(q)\right|
=
\prod_{i=0}^{k-1}(q^n-q^i).
\]
Indeed, after choosing \(i\) linearly independent rows, their span has
cardinality \(q^i\), and the next row can be chosen in
\(q^n-q^i\) ways.

To count the full-rank matrices whose first \(t\) columns already have
rank \(k\), write
\[
A=[\,B\mid C\,],
\]
where
\[
B\in\mathbb{F}_q^{k\times t},
\qquad
C\in\mathbb{F}_q^{k\times(n-t)}.
\]
The matrix \(B\) must have rank \(k\), whereas \(C\) may be arbitrary.
Hence, the number of such matrices is
\[
q^{k(n-t)}
\prod_{i=0}^{k-1}(q^t-q^i).
\]

It follows that
\[
\Pr\left(\rank(A_{[t]})=k\right)
=
q^{k(n-t)}
\prod_{i=0}^{k-1}
\frac{q^t-q^i}{q^n-q^i}.
\]
Substituting this into \eqref{eq:random-tail} proves
\eqref{eq:random-exact}.

Finally,
\[
q^{n-t}
\frac{q^t-q^i}{q^n-q^i}
=
\frac{1-q^{i-t}}{1-q^{i-n}}
=
1-
\frac{q^{i-t}-q^{i-n}}
{1-q^{i-n}}.
\]
Distributing \(q^{k(n-t)}\) over the \(k\) factors in the product gives
\eqref{eq:random-exact-product}.
\end{proof}

The first term in \eqref{eq:random-exact} is the coverage depth of an
MDS code. Indeed, every \(k\) distinct columns of an MDS generator
matrix are linearly independent, and therefore
\[
\mathbb{E}[T_{\mathrm{MDS}}]
=
n\left(H_n-H_{n-k}\right).
\]

We define the additive gap from the MDS benchmark.
\begin{defn}
\[
\Delta_{n,k,q}
\triangleq
\mathbb{E}[T]
-
n\left(H_n-H_{n-k}\right).
\]
\end{defn}
Next, we derive an upper bound on this gap. The analysis uses the
following auxiliary sequence.

\begin{defn}
For \(m\geq1\), define
\[
S_m(q)
=
\sum_{j=1}^{m}
\frac{m}{j}q^{j-m}.
\]    
\end{defn}

\begin{cla}
\label{claim:Sm}
\begin{equation}
\label{eq:Sm-recurrence}
S_{m+1}(q)
=
1+\frac{m+1}{qm}S_m(q).
\end{equation}
Moreover, the sequence \(\{S_m(q)\}_{m\geq1}\) is bounded and
\[
\lim_{m\to\infty}S_m(q)
=
\frac{q}{q-1}.
\]
\end{cla}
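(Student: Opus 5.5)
The plan is to derive the recurrence by direct algebra from the definition, and then obtain boundedness and the limit from the recurrence viewed as an asymptotically contracting affine iteration.

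\emph{Recurrence.} Start from
\[
S_{m+1}(q)=\sum_{j=1}^{m+1}\frac{m+1}{j}q^{j-m-1}.
\]
Split off the term $j=m+1$, which equals $1$. The remaining sum is
\[
\frac{m+1}{q}\sum_{j=1}^{m}\frac{1}{j}q^{j-m}=\frac{m+1}{qm}S_m(q),
\]
and this gives~\eqref{eq:Sm-recurrence} immediately.

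\emph{Boundedness.} Set $a_m=\frac{m+1}{qm}$. Since $a_m\le \frac{2}{q}$, the case $q\ge3$ is easy: an induction shows $S_m\le \frac{1}{1-2/q}$. For $q=2$ we only have $a_m\le 1$, so a direct bound is cleaner. Write $i=m-j$; then
\[
S_m(q)=\sum_{i=0}^{m-1}\frac{m}{m-i}q^{-i}.
\]
Split the sum at $i\le m/2$ versus $i>m/2$:
\begin{itemize}
\item For $i\le m/2$, we have $\frac{m}{m-i}\le 2$, so this part is at most $2\sum_i q^{-i}\le \frac{2q}{q-1}$.
\item For $i>m/2$, each term is at most $m\,q^{-m/2}$, and there are at most $m$ terms, so this part is at most $m^2q^{-m/2}$. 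This tends to $0$ and is therefore bounded.
\end{itemize}
Hence $S_m$ is bounded for every $q\ge2$.

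\emph{Limit.} Use the same representation. For each fixed $i$, the summand $\frac{m}{m-i}q^{-i}\mathbf 1_{i<m}$ tends to $q^{-i}$ as $m\to\infty$. The tail $i>m/2$ contributes $o(1)$, as shown above. On the range $i\le m/2$, the summands are dominated by the summable sequence $2q^{-i}$. Dominated convergence (or Tannery's theorem) then gives
\[
\lim_{m\to\infty}S_m(q)=\sum_{i\ge0}q^{-i}=\frac{q}{q-1}.
\]
As a consistency check, pass to the limit $L$ in the recurrence: since $a_m\to 1/q$, we get $L=1+L/q$, so $L=q/(q-1)$.

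\emph{Main obstacle.} The only delicate point is the binary case, where the recurrence coefficient $a_m$ is not uniformly below $1$. I expect to handle it with the explicit sum representation above rather than with the recurrence.
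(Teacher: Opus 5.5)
Your proof is correct. The recurrence derivation and the reindexed form $S_m(q)=\sum_{i=0}^{m-1}\frac{m}{m-i}q^{-i}$ match the paper exactly; the two arguments diverge only after that.

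For boundedness, the paper uses a single termwise estimate, $\frac{m}{m-r}\le 1+r$ (equivalent to $r(m-r-1)\ge 0$). This dominates the whole sum by $\sum_{r\ge 0}(1+r)q^{-r}$ uniformly in $m$ and for every $q\ge 2$. Your split at $m/2$ reaches the same conclusion with more bookkeeping. Your separate induction for $q\ge 3$ is unnecessary, and $q=2$ is not really delicate, since $a_m\le 3/4$ for $m\ge 2$.

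For the limit, the paper does not use dominated convergence. It takes $\limsup$ and $\liminf$ in the recurrence and uses $a_m\to 1/q$ to get $\overline{S}=1+\overline{S}/q$ and $\underline{S}=1+\underline{S}/q$. Boundedness is exactly what rules out the spurious solution $\overline{S}=\infty$, and it forces both quantities to equal $q/(q-1)$.

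Your Tannery/dominated-convergence argument works directly on the series and never needs the recurrence; for you it is only a consistency check. With the paper's majorant $(1+i)q^{-i}$, you could apply Tannery on the full range of $i$ without splitting off the tail. The paper's route is shorter because it reuses the recurrence it has just established. Yours is more self-contained and makes the convergence mechanism explicit: termwise $\frac{m}{m-i}\to 1$ under a summable majorant.
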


\begin{proof}
We have
\[
\begin{split}
    S_{m+1}(q) &= \sum_{j=1}^{m+1} \frac{m+1}{j}q^{j-m-1}
    = 1+ \sum_{j=1}^{m} \frac{m+1}{j}q^{j-m-1} \\
    &= 1+ \frac{m+1}{qm} \sum_{j=1}^{m} \frac{m}{j}q^{j-m} = 1+\frac{m+1}{qm}S_m(q),
\end{split}
\]
which proves \eqref{eq:Sm-recurrence}.

We next show that \(S_m(q)\) is bounded. Setting \(r=m-j\), we obtain
\[
S_m(q)
=
\sum_{r=0}^{m-1}
\frac{m}{m-r}q^{-r}.
\]
Note that, for \(0\leq r\leq m-1\),
\[
\frac{m}{m-r}\leq 1+r.
\]
Equivalently, 
\[
m
\leq
(1+r)(m-r)
=
m+r(m-r-1).
\]

It follows that
\[
S_m(q)
\leq
\sum_{r=0}^{m-1}(1+r)q^{-r}
\leq
\sum_{r=0}^{\infty}(1+r)q^{-r}
<\infty.
\]
Thus, the sequence is bounded.

Let
\[
\overline{S}
=
\limsup_{m\to\infty}S_m(q),
\qquad
\underline{S}
=
\liminf_{m\to\infty}S_m(q).
\]
Since shifting a sequence does not change its limit superior and
\(
\frac{m+1}{qm}\longrightarrow\frac{1}{q},
\)
Equation~\eqref{eq:Sm-recurrence} gives
\(
\overline{S}
=
1+\frac{1}{q}\overline{S}.
\)
Therefore,
\(
\overline{S}
=
\frac{q}{q-1}.
\) Similarly,
\(
\underline{S}
=
1+\frac{1}{q}\underline{S},
\)
and hence
\(
\underline{S}
=
\frac{q}{q-1}.
\)
Thus,
\(
\underline{S}
=
\overline{S}
=
\frac{q}{q-1},
\)
which proves that
\(
S_m(q)\longrightarrow\frac{q}{q-1}.
\)
\end{proof}

The following lemma relates \(S_m(q)\) to the gap from the MDS
benchmark.

\begin{lem}
\label{lem:random-gap-bound}
Let \(m=n-k\). Then
\begin{equation}
\label{eq:random-gap-bound}
\Delta_{n,k,q}
\leq
\frac{1}{1-q^{-m}}
\cdot
\frac{S_m(q)}{q-1}
\cdot
\frac{n}{m}.
\end{equation}
\end{lem}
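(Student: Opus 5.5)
The plan is to start from the exact expression \eqref{eq:random-exact-product} of Lemma~\ref{lem:random-exact}, which gives
\[
\Delta_{n,k,q}=n\sum_{t=k}^{n-1}\frac{1}{n-t}\left(1-\prod_{i=0}^{k-1}(1-x_{i,t})\right),\qquad x_{i,t}=\frac{q^{i-t}-q^{i-n}}{1-q^{i-n}}.
\]
I will bound each bracket by a union-bound estimate, then reindex the remaining sum over $t$ so that it becomes exactly $S_m(q)/m$.

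\textbf{Step 1 (linearize the product).} For $0\le i\le k-1$ and $k\le t\le n-1$ we have $0\le x_{i,t}\le 1$. The product $\prod_i(1-x_{i,t})$ equals the probability that $A_{[t]}$ has full rank, so it lies in $[0,1]$. Hence the elementary inequality $1-\prod_i(1-x_i)\le\sum_i x_i$ applies; it follows from a union bound, or by induction on the number of factors. This gives
\[
1-\prod_{i=0}^{k-1}(1-x_{i,t})\le\sum_{i=0}^{k-1}x_{i,t}.
\]

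\textbf{Step 2 (bound each $x_{i,t}$).} I drop the subtracted term in the numerator, so the numerator is at most $q^{i-t}$. For the denominator, note that $i-n\le k-1-n=-m-1<-m$, so $1-q^{i-n}\ge 1-q^{-m}$. Therefore $x_{i,t}\le q^{i-t}/(1-q^{-m})$. Summing the geometric series over $i$ gives
\[
\sum_{i=0}^{k-1}q^{i-t}=q^{-t}\frac{q^k-1}{q-1}\le\frac{q^{k-t}}{q-1}.
\]
Consequently
\[
\Delta_{n,k,q}\le\frac{n}{(q-1)(1-q^{-m})}\sum_{t=k}^{n-1}\frac{q^{k-t}}{n-t}.
\]

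\textbf{Step 3 (reindex).} I substitute $j=n-t$, which runs from $1$ to $m=n-k$. Then $q^{k-t}=q^{k-n+j}=q^{j-m}$, so
\[
\sum_{t=k}^{n-1}\frac{q^{k-t}}{n-t}=\sum_{j=1}^{m}\frac{q^{j-m}}{j}=\frac{S_m(q)}{m}.
\]
Inserting this into the bound from Step 2 gives exactly \eqref{eq:random-gap-bound}.

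\textbf{Main obstacle.} The only step needing care is the direction of the inequalities in Step 2. The denominator bound must be uniform in $i$, and it rests on $i\le k-1$ forcing $1-q^{i-n}\ge 1-q^{-m}$. I also need to confirm that each $x_{i,t}$ is nonnegative, which holds because $t<n$ implies $q^{i-t}\ge q^{i-n}$. Once these are checked, the rest is bookkeeping.
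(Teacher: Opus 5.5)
Your proposal is correct and matches the paper's proof step for step. It uses the Weierstrass product (union-bound) inequality to linearize $1-\prod_i(1-x_{i,t})$, drops $q^{i-n}$ from the numerator, applies the uniform bound $1-q^{i-n}\ge 1-q^{-m}$ and the geometric sum over $i$, and substitutes $j=n-t$ to obtain $S_m(q)/m$. One small correction: the inequality needs each $x_{i,t}\le 1$, which follows from $q^{i-t}<1$ (since $i<k\le t$), not from the product being a probability.
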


\begin{proof}
For \(k\leq t\leq n-1\) and \(0\leq i<k\), define
\[
a_i
=
\frac{q^{i-t}-q^{i-n}}{1-q^{i-n}}.
\]
Since \(k\leq t<n\) and \(i<k\), we have $0<a_i<1.$
Using Weierstrass Product Inequality,
equation~\eqref{eq:random-exact-product} gives
\[
\Delta_{n,k,q}
\leq
n\sum_{t=k}^{n-1}
\frac{1}{n-t}
\sum_{i=0}^{k-1}
\frac{q^{i-t}-q^{i-n}}
{1-q^{i-n}}
\le
n\sum_{t=k}^{n-1}
\frac{1}{n-t}
\sum_{i=0}^{k-1}
\frac{q^{i-t}}
{1-q^{i-n}}.
\]

Since \(i\leq k-1\),
\[
\frac{1}{1-q^{i-n}}
\leq
\frac{1}{1-q^{k-n}}.
\]
Therefore,
\[
\Delta_{n,k,q}
\leq
\frac{n}{1-q^{k-n}}
\sum_{t=k}^{n-1}
\frac{1}{n-t}
\sum_{i=0}^{k-1}q^{i-t}.
\]

Furthermore,
\[
\sum_{i=0}^{k-1}q^{i-t}
=
q^{-t}\frac{q^k-1}{q-1}
\leq
\frac{q^{k-t}}{q-1}.
\]
Thus,
\[
\Delta_{n,k,q}
\leq
\frac{n}{1-q^{k-n}}
\cdot
\frac{1}{q-1}
\sum_{t=k}^{n-1}
\frac{q^{k-t}}{n-t}.
\]

Set \(m=n-k\) and change variables according to \(j=n-t\). Then
\[
\sum_{t=k}^{n-1}
\frac{q^{k-t}}{n-t}
=
\sum_{j=1}^{m}
\frac{q^{j-m}}{j}
=
\frac{S_m(q)}{m}.
\]
Since \(q^{k-n}=q^{-m}\), we obtain
\[
\Delta_{n,k,q}
\leq
\frac{1}{1-q^{-m}}
\cdot
\frac{S_m(q)}{q-1}
\cdot
\frac{n}{m}.
\]
\end{proof}

We can now derive the asymptotic behavior in the two main regimes.

\begin{thm}[Constant-rate regime]
\label{thm:random-constant-rate}
Let \(q\) and \(R\in(0,1)\) be fixed, and suppose that
\[
\frac{k}{n}=R.
\]
Then, as \(n\to\infty\),
\[
\mathbb{E}[T]
\leq
n\left(H_n-H_{n-k}\right)
+
\frac{q}{(q-1)^2}\cdot\frac{1}{1-R}
+
o(1).
\]
Equivalently, using \(n=k/R\) and the asymptotic expansion of the
harmonic numbers,
\[
\mathbb{E}[T]
\leq
n\log\left(\frac{1}{1-R}\right)
+
\frac{1}{1-R}
\left(
\frac{q}{(q-1)^2}-\frac{R}{2}
\right)
+
o(1).
\]

In particular, the additive gap between the expected coverage depth of
a random code and the MDS benchmark is bounded independently of \(n\).
\end{thm}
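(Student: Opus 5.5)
The plan is to derive the first inequality directly from Lemma~\ref{lem:random-gap-bound} combined with Claim~\ref{claim:Sm}, and the second from standard harmonic-number asymptotics.

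Write $m=n-k=(1-R)n$, so $n/m=1/(1-R)$ is constant and $m\to\infty$ as $n\to\infty$. Lemma~\ref{lem:random-gap-bound} gives
\[
\Delta_{n,k,q}\le \frac{1}{1-q^{-m}}\cdot\frac{S_m(q)}{q-1}\cdot\frac{1}{1-R}.
\]
Claim~\ref{claim:Sm} gives $S_m(q)\to q/(q-1)$, and we also have $1/(1-q^{-m})\to 1$. Hence the right-hand side converges to $\frac{q}{(q-1)^2}\cdot\frac{1}{1-R}$. We can therefore write it as this limit plus an error term that is $o(1)$. Adding $n(H_n-H_{n-k})$ to both sides, via the definition of $\Delta_{n,k,q}$, yields the first displayed bound. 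The boundedness of the gap independent of $n$ follows immediately.

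For the equivalent form, use $H_N=\log N+\gamma+\frac{1}{2N}+O(N^{-2})$. Then
\[
n(H_n-H_{n-k})=n\log\frac{n}{n-k}+\frac{1}{2}-\frac{n}{2(n-k)}+O(1/n).
\]
Since $n/(n-k)=1/(1-R)$, this equals $n\log\frac{1}{1-R}+\frac12-\frac{1}{2(1-R)}+o(1)$. We also have
\[
\frac12-\frac{1}{2(1-R)}=-\frac{R}{2(1-R)}.
\]
Combining this with the gap term gives
\[
n\log\frac{1}{1-R}+\frac{1}{1-R}\left(\frac{q}{(q-1)^2}-\frac{R}{2}\right)+o(1),
\]
as claimed.

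Two points need care. First, $k=Rn$ must be an integer, so $n$ ranges only over values for which $Rn\in\mathbb{Z}$; this does not affect the limits. Second, the $O(N^{-2})$ remainders multiplied by $n$ must still vanish. This holds because $N\in\{n,n-k\}$ and both are of order $n$, so $n\cdot O(n^{-2})=O(1/n)$.

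Nothing here is a real obstacle; the only step needing attention is correctly tracking the constant term $\frac12-\frac{n}{2(n-k)}$ in the harmonic expansion.
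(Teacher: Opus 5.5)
Your proposal is correct and follows essentially the same route as the paper: you set $m=n-k=(1-R)n$ and substitute $n/m=1/(1-R)$ into the gap bound of Lemma~\ref{lem:random-gap-bound}, then use $S_m(q)\to q/(q-1)$ from Claim~\ref{claim:Sm} together with $1/(1-q^{-m})\to 1$. The paper only asserts the second (harmonic-number) form, and your expansion correctly fills in that omitted step, including the constant $\frac12-\frac{1}{2(1-R)}=-\frac{R}{2(1-R)}$ and the observation that $Rn$ must be an integer.
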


For comparison, Wang and Yaakobi~\cite[Corollary~6]{wang2026randomaccess} showed that the optimal coverage depth in the same constant-rate regime is within a constant additive gap of the MDS benchmark, with the explicit upper bound
\[
\Delta_{n,k,q}
\leq
2\cdot\frac{q}{(q-1)^2}\cdot\frac{1}{1-R}+o(1).
\]
Our result concerns the ensemble-average performance rather than the existence of a favorable code: it shows that the average over all full-rank matrices satisfies
\[
\Delta_{n,k,q}\leq
\frac{q}{(q-1)^2}\cdot\frac{1}{1-R}+o(1).
\]
Thus, the entire full-rank ensemble achieves a bounded average additive gap, with an explicit constant that is a factor of two smaller than the one stated in~\cite[Corollary~6]{wang2026randomaccess}.

\begin{proof}
Let \(m=n-k\). Since \(k/n=R\), we have
\(
m=n(1-R).
\) Hence, \(m\to\infty\) as \(n\to\infty\), and
\(
\frac{n}{m}=\frac{1}{1-R}.
\) Moreover, since \(m\to\infty\), Lemma~\ref{claim:Sm} gives
\(
S_m(q)=\frac{q}{q-1}+o(1),
\)
while
\(
\frac{1}{1-q^{-m}}=1+o(1).
\)
Applying Lemma~\ref{lem:random-gap-bound}, we obtain
\[
\Delta_{n,k,q}
\leq
\frac{q}{(q-1)^2}\cdot\frac{1}{1-R}
+
o(1),
\]
which proves the result.
\end{proof}

\begin{thm}[Constant-redundancy regime]
\label{thm:random-constant-redundancy}
Let \(q\) and \(c\geq1\) be fixed, and suppose that
\[
n-k=c.
\]
Then, as \(n\to\infty\),
\[
\mathbb{E}[T]
\leq
n\left(H_n-H_c\right)
+
\frac{1}{1-q^{-c}}
\cdot
\frac{S_c(q)}{q-1}
\cdot
\frac{n}{c}
=
n\left(H_n-H_c\right)
+
O_{q,c}(n).
\]
\end{thm}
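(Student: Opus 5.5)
This follows directly from Lemma~\ref{lem:random-gap-bound} and Claim~\ref{claim:Sm}, specialized to $m=n-k=c$ held fixed.

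First, I will identify the MDS benchmark in this regime. Since $n-k=c$, we have $H_{n-k}=H_c$. The first term of the definition of $\Delta_{n,k,q}$ is therefore
\[
n\left(H_n-H_{n-k}\right)=n\left(H_n-H_c\right),
\]
and hence $\mathbb{E}[T]=n(H_n-H_c)+\Delta_{n,c\text{-redundancy}}$, where the gap is the quantity $\Delta_{n,n-c,q}$.

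Second, I will apply Lemma~\ref{lem:random-gap-bound} with $m=c$. This gives
\[
\Delta_{n,n-c,q}\leq \frac{1}{1-q^{-c}}\cdot\frac{S_c(q)}{q-1}\cdot\frac{n}{c}.
\]
That lemma was proved for arbitrary $k\le n$ with $m=n-k\ge 1$, and here $c\ge 1$, so its hypotheses hold. Substituting this bound into the expression for $\mathbb{E}[T]$ gives the displayed inequality.

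Third, I will justify the $O_{q,c}(n)$ form. The factors $1/(1-q^{-c})$, $1/(q-1)$ and $1/c$ depend only on $q$ and $c$. The factor $S_c(q)=\sum_{j=1}^{c}\frac{c}{j}q^{j-c}$ is a finite sum that also depends only on $q$ and $c$; alternatively, Claim~\ref{claim:Sm} shows that $S_c(q)$ is bounded by $\sum_{r\ge0}(1+r)q^{-r}$. The coefficient of $n$ is therefore a constant $C(q,c)$, and the gap is at most $C(q,c)\,n=O_{q,c}(n)$.

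No step presents a real obstacle. The only point needing care is confirming that the earlier lemma's derivation did not implicitly assume $m\to\infty$ or constant rate. It did not: the Weierstrass product inequality step and the reindexing $j=n-t$ hold for any fixed $m\ge1$. The statement therefore holds for every $n>c$, not merely asymptotically.
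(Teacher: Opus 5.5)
Your proposal is correct and matches the paper's proof: the paper also just sets $m=n-k=c$ in Lemma~\ref{lem:random-gap-bound}, using $H_{n-k}=H_c$. Your additional remarks (that the coefficient of $n$ depends only on $q$ and $c$, and that the bound holds for every $n>c$, not only asymptotically) are valid and spell out what the paper leaves implicit.
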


\begin{proof}
Set \(m=n-k=c\) in Lemma~\ref{lem:random-gap-bound}. We obtain
\[
\Delta_{n,k,q}
\leq
\frac{1}{1-q^{-c}}
\cdot
\frac{S_c(q)}{q-1}
\cdot
\frac{n}{c},
\]
which is exactly the claimed bound.
\end{proof}

\section{Summary and Open Problems}\label{section:Summary and Open Problems}
In this work, we studied the coverage depth of several important families of linear codes. While MDS codes are known to be optimal with respect to coverage depth whenever they exist, we showed that the \(q\)-ary Simplex and Hamming codes are also optimal for their respective parameters.

We also analyzed the coverage depth of random linear codes and showed that, in the constant-rate regime, their coverage depth approaches the MDS bound. Nevertheless, determining the coverage depth, and in particular establishing optimality, for many other important families of codes remains open. A natural direction for future work is therefore to characterize the coverage depth of additional structured code families and to identify further parameter regimes in which optimal codes can be determined. More generally, it would be interesting to obtain structural conditions that characterize, or even restrict, the class of codes that are optimal with respect to coverage depth.

\section{Acknowledgment} 
The authors would like to thank Ronny Roth and Roni Con for helpful discussions related to the analysis of random linear codes.

The research was funded by the European Union (ERC, 101045114 and EIC, DiDAX 101115134). Views and opinions expressed are, however, those of the authors only and do not necessarily reflect those of the European Union or the European Research Council Executive Agency. Neither the European Union nor the granting authority can be held responsible for them.

\bibliographystyle{IEEEtran}
\bibliography{ref}

\end{document}